\documentclass[12,reqno]{amsart} \sloppy
\usepackage{amsmath,amssymb,amsthm,mathtools}
\usepackage{float}\restylefloat{figure}
\usepackage[pdftex]{graphicx}
\usepackage[usenames]{color}
\usepackage{hyperref}
\usepackage[leftmargin=15pt,vskip=8pt]{quoting}
\usepackage{url}
\parskip1ex

\newtheorem{theorem}{Theorem}[section]
\newtheorem{corollary}[theorem]{Corollary}
\newtheorem{claim}[theorem]{Claim}
\newtheorem{lemma}[theorem]{Lemma}

\newtheorem{proposition}[theorem]{Proposition}

\theoremstyle{definition}

\newtheorem{definition}[theorem]{Definition}
\newtheorem{example}{Example}

\newtheorem{remark}[theorem]{Remark}

\newcommand\ang[1]{\ensuremath{\big\langle#1\big\rangle}}

\newcommand{\bE}{\ensuremath{\textbf{E}}}
\newcommand{\bra}[1]{\ensuremath{\langle#1|}}
\newcommand{\braket}[2]{\ensuremath{\langle#1\,|\,#2\rangle}}
\newcommand\C{\ensuremath{\mathbb C}}
\newcommand\Dom[1]{\ensuremath{\textrm{Dom}{(#1)}}}

\newcommand\E{\ensuremath{\mathcal E}}
\newcommand\ES{\ensuremath{\mathrm{ES}}}
\newcommand\EV{\ensuremath{\mathrm{EV}}}

\renewcommand{\H}{\ensuremath{\mathcal H}}
\newcommand{\ket}[1]{\ensuremath{|#1\rangle}}

\renewcommand\O{\ensuremath{\mathcal O}}

\newcommand\ox{\ensuremath{\otimes}}
\renewcommand\P{\ensuremath{\mathcal P}}
\renewcommand\phi{\varphi}
\newcommand\qef{\hfill$\triangleleft$} 
\newcommand\qefhere{\tag*{$\triangleleft$}} 
\newcommand\R{\ensuremath{\mathbb R}}
\newcommand\cS{\ensuremath{\mathcal S}}

\newcommand{\notarrow}{\kern .42em\not\kern -.42em\longrightarrow}

\newenvironment{sm}
  {\left(\begin{smallmatrix}}
  {\end{smallmatrix}\right)}

\title[Negative probabilities]{Negative probabilities: \\
  What are they for?}
\author{Andreas Blass}
\address{Mathematics Department\\
University of Michigan\\
Ann Arbor, MI 48109--1043, U.S.A.}
\email{ablass@umich.edu}
\author{Yuri Gurevich}
\address{Computer Science and Engineering\\
University of Michigan\\
Ann Arbor, MI  48109-2121, U.S.A}
\email{gurevich@umich.edu}
\thanks{Partially supported by the US Army Research Office under W911NF-20-1-0297}

\begin{document}

\begin{abstract}
An \emph{observation   space} \cS\ is a family of probability distributions \ang{P_i: i\in I} sharing a common sample space $\Omega$ in a consistent way.
A \emph{grounding} for $\cS$ is a signed probability distribution $\P$ on $\Omega$ yielding the correct marginal distribution $P_i$ for every $i$.
A wide variety of quantum scenarios can be formalized as observation spaces.
We describe all groundings for a number of quantum observation spaces.
Our main technical result is a rigorous proof that Wigner's distribution is the unique signed probability distribution yielding the correct marginal distributions for position and momentum and all their linear combinations.
\end{abstract}
\maketitle

\begin{quote}\raggedleft\small\it
What are numbers and what are they for?\\
--- Richard Dedekind \cite{Dedekind}
\end{quote}

\section{Introduction}
\label{intro}

The uncertainty principle asserts a limit to the precision with which position $x$ and momentum $p$ of a particle can be known simultaneously.
You may know the probability distributions of $x$ and $p$ individually in a given quantum state but the joint probability distribution with these marginal distributions of $x$ and $p$ makes no physical sense.
Yet the question whether there exists an appropriate joint distribution (with the correct marginal distributions of $x$ and $p$) makes mathematical sense.  In 1932, Eugene Wigner exhibited such a joint distribution \cite{Wigner}. Some of its values were negative, but this, wrote Wigner, ``must not hinder the use of it in calculations as an auxiliary function which obeys many relations we would expect from such a probability.''

But what are negative probabilities? We split this intriguing question into two: formal and ontological.
The formal question asks how to generalize the axiomatic probability theory of Kolmogorov so that negative probabilities are allowed.
Kolmogorov's theory is based on measure theory, and measure theory provides the desired generalization: Signed probability distributions are special signed measures. In \S\ref{sec:prob}, we define a \emph{signed probability distribution} on a measurable space as a countably additive function assigning a real number to every measurable set and assigning 1 to the set of all points.

The ontological question is much harder. It asks what reality, or at least  intuition, is behind negative probabilities. Obviously, the standard frequential interpretation of probabilities does not apply to negative probabilities.
There are serious attempts to address the ontological question \cite{Feynman,AB} but, at least in our judgement, the ontological question remains wide open.

The question in the center of our attention here is more pragmatic: What are negative probabilities good for? It is not rare in science to usefully apply a notion without properly understanding its ontology. One historical example is complex numbers. They are not quantities, but they could be used to solve algebraic equations. Another historical example is the use of uncountable sets in mathematical analysis.

There are efforts to apply negative probabilities in several disciplines, including biology \cite{Hakulinen}, decision theory \cite{de Barros}, finance \cite{Meissner}, information theory \cite{Liu}, machine learning \cite{Zhao}, and transportation science \cite{Xie}. But the vast majority of negative-probability applications are in quantum physics. Wigner's discovery, mentioned above, led to a whole new approach to foundations of quantum mechanics \cite{Zachos} and to fruitful physical applications, especially in quantum tomography; see \cite{Smithey, UIUC,W-QT} for example.

Many of the disparate quantum applications can be seen as examples of a certain application template. Thinking of Wigner's 1932 discovery provoked us to introduce the template.

According to quantum mechanics, some but not all observables can be measured simultaneously, and the theory predicts probabilities for the outcomes of such measurements. Thus, for a given physical system, we may have numerous coexisting probability distributions which are impossible to subsume under a single distribution for all the observables. But negative probabilities may allow us the desired single distribution, which we call a grounding and which may be useful for analysis, computation and even applications.

In more detail, consider a family, indexed by an arbitrary set $I$, of probability distributions \ang{\P_i: i\in I} sharing a common sample space $\Omega$ in a consistent way, so that any two of the distributions agree on the events where both of them are defined.
We call members of the domain \Dom{P_i} of any $P_i$ \emph{observable events}, and we call observable events \ang{e_j: j\in J} \emph{coobservable} if they all belong to \Dom{P_i} for the same $i$. Finally, we call the pair $\cS = (\Omega,\ang{\P_i: i\in I})$ an \emph{observation space}.

The idea behind observability is this. Each $P_i$ reflects a probabilistic experiment, e.g.\ a quantum measurement, and the information obtained by running the experiment specifies an atom $a$ of the Boolean algebra \Dom{P_i}.
While atom $a$ is nonempty, it need not be a singleton (as we will see below).
All events $e\in\Dom{P_i}$ with $a\subseteq e$ occurred in this run of the experiment, and the rest of the events in \Dom{P_i}, those disjoint from $a$, did not occur in this run. In that sense, we observe, for any run and any $e\in\Dom{P_i}$, whether $e$ occurs or not.

It is reasonable to ask whether the legitimate combinations of
observable events, those in the $\sigma$-algebra $\Sigma$ generated by the observable events, can be assigned probabilities in a consistent way.
Our notion of grounding formalizes the hope of doing just that. A \emph{grounding}, or \emph{ground distribution}, for the observation space \cS\ is a signed probability distribution $\P$ on the measurable space $(\Omega,\Sigma)$ such that the original distributions $\P_i$ are restrictions of $\P$.
The \emph{grounding problem} for  \cS\ is the problem of describing  the groundings for \cS. Is there a grounding? Is there a grounding which is nonnegative, i.e., has no negative values? Is the grounding unique? What are the general forms of groundings and of nonnegative groundings?

In \S\ref{sec:obs}, we study observation spaces and prove a modeling theorem according to which a wide variety of quantum and classical scenarios can be formalized as observation spaces.

In many cases, the resulting observation space is finite.
\S\ref{sec:finite} is devoted to quantum scenarios modeled by finite observation spaces. We solve the grounding problem for a number of observation spaces.
In particular, we address the scenario, studied by Feynman, involving a spin-$\frac12$ particle and two components of its spin \cite{Feynman}.
The uncertainty principle implies that these two quantities cannot have definite values simultaneously.
So it seems plausible that an attempt to assign joint
probabilities would again, as in Wigner's case, lead necessarily to
negative probabilities.
It turns out that negative probabilities can be avoided in this situation. Specifically, the groundings for Feynman's scenario form an infinite family with one real parameter $t$, and there is a nonempty real interval $[m,M]$ such that the grounding is nonnegative if and only if $m\le t\le M$.
We also consider a Feynman-type scenario involving three independent components of a particle's spin. There are nonnegative groundings in that scenario as well. We give a general solution for all groundings and for all nonnegative groundings in that scenario.

The proof of the modeling theorem gives us a general method of
modeling quantum scenarios by observation spaces. But, in specific
scenarios, e.g.\ in all scenarios studied in \S\ref{sec:finite}, modeling can be done more economically and more faithfully.

In \S\ref{sec:context}, we show that, in contextual situations like that in the Kochen-Specker theorem, such faithful modeling is impossible.

Wigner's distribution happens to be the unique signed probability distribution on the phase space that yields the correct marginal distributions not only for position and momentum but for all their linear combinations \cite{Bertrand}. In other words, it is the unique grounding for the observation space formed by the distributions of these linear combinations. In \S\ref{sec:w}, we give a rigorous proof of that fact.

\subsection*{Acknowledgement.}

We thank Alexander Volberg for contributing Lemma~\ref{lem:sasha} and Vladimir Vovk for a useful correction.

\section{Signed probability distributions}
\label{sec:prob}

A \emph{measurable space} is a set $\Omega$ together with a
$\sigma$-algebra of subsets of $\Omega$. Elements of $\Omega$ are
\emph{sample points}, $\Omega$ is the \emph{sample space} and members
of the $\sigma$-algebra are \emph{measurable sets}. In this paper, by
default, the sample space is not empty.

\begin{definition}\label{def:sp}
  A \emph{signed probability distribution} on a measurable space
  $(\Omega,\Sigma)$ is a real-valued, countably additive function on $\Sigma$ assigning value 1 to $\Omega$. If $e\in \Dom{\P}$, then the number $\P(e)$ is the \emph{probability} of $e$. If all
  probabilities are nonnegative, then $\P$ is \emph{nonnegative}. \qef
\end{definition}

One may worry whether countable additivity makes sense in  signed
probability spaces, i.e., whether $\P(\bigcup_n e_n) = \sum_n
\P(e_n)$ whenever events $e_n$ are pairwise disjoint. A priori, the
sum could depend on the order of the events. But in fact it does not.
Indeed, $\sum \{\P(e_n): \P(e_n)\ge0\}$ converges to the probability $p$ of event
$\bigcup\{e_n: \P(e_n)\ge0\}$, and  $\sum \{\P(e_n): \P(e_n)<0\}$
converges to the probability $q$ of event $\bigcup\{e_n: \P(e_n)<0\}$. Accordingly, $\sum_n \P(e_n)$ converges absolutely to $p + q$, and therefore the order of summands is irrelevant.

Many laws of nonnegative probability theory survive the generalization to signed probability theory.
\begin{quoting}
``Since the formal structure for the distributions is unaltered, it is trivial to show that the probabilities in the extended theory must for consistency obey the same rules as before, that is, the addition and multiplication laws." \cite[p.~72]{Bartlett}
\end{quoting}
For example, we can use, in the signed case, the standard definitions of random variable, expectation, independence, and standard deviation, and we still have
\begin{align*}
\bE(aX + bY) &\ =\  a\bE(X) + b\bE(Y),\\
X,Y\textrm{ are independent } &\implies \bE(XY) = \bE(X)\bE(Y),\\
\sigma(aX+b) &= |a|\cdot \sigma(X).
\end{align*}
But one should be careful. For example, the following laws fail in the signed case.
\begin{align*}
 &\phantom{\ \implies}\P(e) \le 1,\\
\big(\P(e) = 0\textrm{ and } \omega\in e\big) &\implies \P(\omega) = 0,\\
 \big( X(\omega)\le Y(\omega)\textrm{ for all sample points }\omega
  \big) &\implies \bE(X)\le\bE(Y).
\end{align*}

Probabilistic experiments, whether real-world experiments or  thought experiments, may give rise to a signed probability distribution.

\begin{example}[Piponi's thought experiment \cite{Piponi}]
A machine produces boxes with pairs $(l,r)$ of bits.
In any run of the experiment, exactly one of the following three tests can be done.
\begin{enumerate}
\item Look through the left window and observe $l$.
\item Look through the right window and observe $r$
\item Test whether $l=r$.
\end{enumerate}
Performing these tests repeatedly, you find out that
\begin{itemize}
\item in the left window you always see 1,
\item in the right window you always see 1,
\item the two bits are always different.
\end{itemize}
This gives rise to a signed probability distribution $\P$ on the
sample space $\Omega = \big\{(l,r): l,r\in\{0,1\}\big\}$.
Let $p_{lr}
= \P(l,r)$, so that
\[ p_{00} + p_{01} + p_{10} + p_{11} = 1 .\]
The three tests give three additional constraints:
\[ p_{10} + p_{11} = 1, \quad
   p_{01} + p_{11} = 1, \quad
   p_{01} + p_{10} = 1.
\]
The unique solution of this system of four linear equations is this:
\[ p_{00} = -\frac12,\quad p_{01} = p_{10} = p_{11} = \frac12.
\qefhere \]
\end{example}

Piponi's thought experiment allows one to perform some computations, e.g.,
\begin{align*}
\bE(l) &= 0\cdot(-1/2) + 0\cdot1/2 + 1\cdot1/2 + 1\cdot1/2\\
&= (1/2)[-0 + 0 + 1 + 1] = 1 = \bE(r),\\
\bE(l+r) &= (1/2)[-0 + 1 + 1 + 2] = 2,\\
\sigma(l) &= \sqrt{\bE[(l-1)^2]}
= \sqrt{(1/2)[-1 + 1 + 0 + 0]} = 0 = \sigma(r),\\
\sigma(l+r) &= \sqrt{\bE[(l+r-2)^2]}
= \sqrt{(1/2)[-4 + 1 + 1 + 0]} = \sqrt{-1}.
\end{align*}
Note some unusual features of standard deviation. The most glaring is
the imaginary standard deviation of $l+r$. But there is something
else. In nonnegative probability theory, a  standard deviation of zero
indicates that the random variable is almost everywhere constant. This
still works for $l$ in the sense that $l=1$ on the set $\{10,11\}$
whose complement $\{00,01\}$ has measure $-1/2 +1/2 = 0$. Similarly
$r$ is almost everywhere constant. But it is not the case that $l+r$ is almost everywhere constant; in signed probability
spaces, the union of sets of measure zero need not have measure zero.

Does Piponi's thought experiment make any physical sense?
Maybe. Perhaps one can come up with an appropriate real-world quantum
experiment involving two qubits and three projective yes/no
measurements.

But physics isn't the only prospective application domain for signed probability distributions.
Piponi's experiment may make some sense in social sciences. It may be
possible that
\begin{itemize}
\item one party provides overwhelming evidence in favor of a claim $L$,
\item the other party provides overwhelming evidence in favor of an
  alternative claim $R$, and
\item in all circumstances, exactly one of the two claims is true and
  the other is false; sometimes $L$ is true and sometimes $R$ is
  true.
\end{itemize}

\section{Observation spaces}
\label{sec:obs}

We introduce a general framework of observation spaces for the sort of situation that arises in Piponi's experiment.
Certain events and
combinations of events are observable and have probabilities. Other
combinations of events cannot be simultaneously observed and need not
have well defined probabilities. It is reasonable to ask whether such
combinations can be assigned probabilities in a way consistent with
the given probabilities of observable events.
Our notion of observation space will formalize the picture of simultaneously observable events and their probability distributions.
Our notion of grounding will formalize the hope of consistently assigning probabilities even to combinations of events which are not simultaneously observable.

\begin{definition}\label{def:OS}
An \emph{observation space} is a nonempty set $\Omega$ together with nonnegative probability distributions \ang{\P_i: i\in I} with $\sigma$-algebras $\Dom{\P_i}$, subject to the \emph{coherence requirement}
\[ e\in\Dom{\P_i}\cap\Dom{\P_j} \implies \P_i(e) = \P_j(e). \qefhere\]
\end{definition}

Elements of $\Omega$ are \emph{sample points}, subsets of $\Omega$ are \emph{events}, and $\Omega$ itself is the \emph{underlying sample space}.
An event $e$ is \emph{observable} if it belongs to some
$\Dom{\P_i}$. A family $E$ of events is \emph{coobservable} if $E\subseteq \Dom{\P_i}$ for some $i\in I$.

Piponi's thought experiment gives rise to an observation space with
sample space $\{00,01,10,11\}$ and three nonnegative probability
distributions $\P_1, \P_2, \P_3$ corresponding to the tests
(1)--(3). In addition to $\emptyset$ and $\{00,01,10,11\}$, the
Boolean algebras $\Dom{\P_1}$, $\Dom{\P_2}$ and $\Dom{\P_3}$  contain complementary atoms
\begin{align*}
(1)\quad & \{00,01\}, \{10,11\},\\
(2)\quad & \{00,10\}, \{01,11\},\\
(3)\quad & \{00,11\}, \{01,10\}
\end{align*}
respectively. \Dom{\P_1} consists of the events that can be observed
by looking into the left window. \Dom{\P_2} and \Dom{\P_3} represent
looking into the right window and testing the equality
respectively. An individual outcome, like 01, represents an unobservable combination of events, namely, looking into the left and the right windows simultaneously (or looking into one of the windows and testing equality simultaneously).

\begin{definition}
A \emph{ground distribution}, or \emph{grounding}, of an observation space $\left(\Omega,\ang{\P_i: i\in I}\right)$ is a signed probability distribution $\P$ on $\Omega$ such that
\begin{itemize}
\item \Dom{\P} is the $\sigma$-algebra generated by all of the observable events and
\item every $\P_i$ is the restriction of $\P$ to $\Dom{\P_i}$. \qef
\end{itemize}
\end{definition}

The  \emph{grounding problem} for a given observation space is the
problem of characterizing the groundings for the
observation space:
\begin{itemize}
\item Is there a grounding? Is there a nonnegative grounding?
\item Is the grounding unique?
\item What are the general forms of signed groundings and of
  nonnegative groundings?
\end{itemize}

The grounding problem is trivial if the given observation space has
only one original distribution.
In this case, the original distribution is the unique grounding. But the case of two original distributions is already nontrivial as we will see in \S\ref{sub:f1}.

A wide variety of quantum (and also classical) scenarios can be formalized as observation spaces. But, in this section, by default, we work with a fixed finite-dimensional Hilbert space \H, so that the spectra of observables are pure point spectra%
\footnote{An alternative restriction, sufficient for our purposes in this section, is that we restrict attention to observables with pure point spectra.}.

\begin{definition}              \label{def:mte}
A \emph{multi-test experiment} over \H\ is a pair
$\big( \ket\psi, \O \big)$ where
\ket{\psi} is a unit vector in \H, and
\O\ is a set of self-adjoint operators, called \emph{observables}, on \H, subject to the following requirement.
\begin{description}
\item[\texttt{Simultaneous measurement}]
If observables $A_1, \dots, A_k\in\O$ commute, then there is an observable $B\in\O$ such that every eigenspace of $B$ is an intersection $X_1\cap\ \dots\ \cap X_k$ of eigenspaces of $A_1, \dots, A_k$ respectively%
\footnote{The requirement follows from its $k=2$ version.}. \qef
\end{description}
\end{definition}

Intuitively, a multi-test experiment is a laboratory experiment or thought experiment where a source repeatedly produces quantum systems in the specified state \ket{\psi}, and each time an observable $A$, arbitrarily chosen from \O, is measured on the system. We call such a measurement a test; hence the name ``multi-test experiment."

If the observables $A_1, A_2\in \O$ commute, then, according to quantum mechanics, we can measure them simultaneously in \ket\psi.
The simultaneous-measurement condition provides an observable $B$ such that measuring $B$ amounts to measuring $A_1$ and $A_2$ together. In that sense, the condition seems rather natural to us.
In particular, the condition is trivially satisfied if \O\ contains no commuting observables, as will be the case in all four experiments in \S\ref{sec:finite}.

\begin{lemma}\label{lem:sm}
Let \O\ be a set of observables satisfying the simultaneous-measurement requirement. If observables $A_1, A_2, \dots, A_k\in\O$ commute, then there is an observable $B$ in \O\ such that every eigenspace of $B$ has the form $X_1\cap X_2\cap \cdots\cap X_k$ where $X_1, X_2, \dots, X_k$ are eigenspaces of $A_1, A_2, \dots, A_k$ respectively.
\end{lemma}

\begin{proof}
Induction on $k$ with the trivial case $k=1$ being the base of induction.
Suppose that the claim is proven for $k$, and let $A_1,\dots, A_{k+1}$ be commuting observables in \O.
By the induction hypothesis, there is an observable $B\in\O$ such that every eigenspace of $B$ has the form $X_1\cap \cdots\cap X_k$ where $X_1,\dots, X_k$ are eigenspaces of $A_1, \dots, A_k$ respectively. It follows that $B$ commutes with $A_{k+1}$.
By the simultaneous-measurement requirement, there is an observable $C\in\O$ such that every eigenspace of $C$ has the form $Y\cap X_{k+1}$ where $Y,X_{k+1}$ are eigenspaces of $B, A_{k+1}$ respectively. Therefore every eigenspace of $C$ has the form $X_1\cap \cdots\cap X_k\cap X_{k+1}$ where $X_1, \dots, X_k, X_{k+1}$ are eigenspaces of $A_1,\dots, A_k, A_{k+1}$ respectively.
\end{proof}

If $A$ is an observable, let $\EV(A)$ be the collection of the eigenvalues of $A$, i.e., the spectrum of $A$.
For each $r\in\EV(A)$, let $E_r(A)$ be the (maximal) eigenspace of $A$ for eigenvalue $r$, and let $\ES(A) = \{E_r(A): r\in\EV(A)\}$.

\begin{definition}\label{def:model}
An observation space $\Big(\Omega,\ang{\P_A: A\in\O}\Big)$ \emph{models} a multi-test experiment $\E = \big(\ket\psi,\O\big)$ over \H\ if, for each $A\in\O$, there is a map $\mu_A: \ES(A) \to 2^\Omega$ such that
\begin{description}
\item[\tt Partition] for every $A\in\O$, events $\alpha_r(A) = \mu_A(E_r(A))$, where $r\in\EV(A)$, partition $\Omega$, and
\item[\tt Correctness] $\P_A(\alpha_r(A))$ is the probability that, according to quantum mechanics, the measurement of $A$ in state \ket{\psi} exhibits $r$. \qef
\end{description}
\end{definition}

\begin{theorem}\label{thm:model}
Every multi-test experiment is modeled by some observation space.
\end{theorem}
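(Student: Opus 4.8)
The plan is to use the canonical ``product'' construction and simply record that it satisfies the definitions. I would take the sample space to be the Cartesian product
$\Omega = \prod_{i\in I}\Out(M_i)$, so that each $\omega\in\Omega$ assigns to every test $i$ a possible result $\omega(i)\in\Out(M_i)$; since every measurement of the (nonzero) space $\H$ has at least one possible result, each factor $\Out(M_i)$ is nonempty and hence $\Omega$ is nonempty. For $i\in I$ and $r\in\Out(M_i)$ set $e_{ir}=\{\omega\in\Omega:\omega(i)=r\}$; for fixed $i$ these events partition $\Omega$. Let $\Dom{\P_i}$ be the Boolean algebra generated by $\ang{e_{ir}:r\in\Out(M_i)}$, i.e.\ the (finite, hence trivially $\sigma$-) algebra of all unions $\bigcup_{r\in A}e_{ir}$, $A\subseteq\Out(M_i)$. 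Define $\P_i(e_{ir})$ to be the probability that, according to quantum mechanics, the test $M_i$ in state $\ket\psi$ exhibits $r$, and extend $\P_i$ to $\Dom{\P_i}$ by additivity.

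I would then verify the three required things. First, each $\P_i$ is a genuine nonnegative probability distribution on the measurable space $(\Omega,\Dom{\P_i})$: its defining values $\P_i(e_{ir})$ are outcome probabilities of a quantum measurement, hence nonnegative and summing to $1=\P_i(\Omega)$; additivity over the finite algebra $\Dom{\P_i}$ is immediate, and countable additivity is the same statement since the algebra is finite. Second, conditions (1) and (2) of Definition~\ref{def:model} hold verbatim, with the events $e_{ir}$ just exhibited. Third --- the one verification carrying any content --- the consistency requirement: suppose $e\in\Dom{\P_i}\cap\Dom{\P_j}$ with $i\ne j$, say $e=\{\omega:\omega(i)\in A\}=\{\omega:\omega(j)\in B\}$. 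If $e\ne\emptyset$, pick $\omega\in e$; for an arbitrary $r'\in\Out(M_i)$ the point $\omega'$ obtained from $\omega$ by resetting only the $i$-th coordinate to $r'$ still has $\omega'(j)=\omega(j)\in B$, so $\omega'\in e$, whence $r'=\omega'(i)\in A$; as $r'$ was arbitrary, $A=\Out(M_i)$ and $e=\Omega$. Likewise $e\notin\{\Omega\}$ forces $e=\emptyset$. In either case $\P_i(e)=\P_j(e)\in\{0,1\}$.

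I do not expect a serious obstacle here: this is the universal, deliberately wasteful construction, and the paper's own later remark that modeling ``can be done more economically, using smaller observation spaces'' signals that the only work is bookkeeping. The two points to be careful about are purely formal: when $I$ is infinite, $\Omega$ is an infinite (possibly uncountable) product, but this causes no difficulty because every $\Dom{\P_i}$ is still finite; and the degenerate case $|\Out(M_i)|=1$, where $\Dom{\P_i}=\{\emptyset,\Omega\}$ and the consistency check above is vacuous. It is also worth noting, as a conceptual sanity check, that the construction never refers to how the tests $M_i$ relate to one another --- mutually incompatible measurements are treated exactly like compatible ones --- which is precisely why the resulting observation space can fail to admit a nonnegative grounding and motivates the signed groundings studied in the rest of the paper.
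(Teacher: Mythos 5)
Your proof is correct and is essentially the paper's own proof: the same product sample space $\Omega=\prod_{i\in I}\Out(M_i)$, the same cylinder events $e_{ir}$, and the same consistency argument (you reset the $i$-th coordinate while the paper resets the $j$-th, but this is cosmetic). The additional bookkeeping you include (nonemptiness of $\Omega$, finiteness of each $\Dom{\P_i}$, the degenerate one-outcome case) is sound and only makes explicit what the paper leaves implicit.
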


\begin{proof}
Consider a multi-test experiment $\E = \big( \ket{\psi}, \O \big)$.
For every $A\in\O$ and every $r\in\EV(A)$, let $p_r(A)$ be the
probability that, according to quantum mechanics, the measurement of $A$ in state \ket{\psi} exhibits $r$.

We construct an observation space $\cS_0 = \big(\Omega,\ang{\P_A: A\in \O}\big)$ modeling \E. Define
\[ \Omega = \prod \ang{\EV(A): A\in\O}. \]
For every $A\in\O$, let
\[ \alpha_r(A) = \{f\in\Omega: f(A) = r\}
\quad \textrm{for every }r\in\EV(A),\]
so that the events \ang{\alpha_r(A): r\in\EV(A)} partition $\Omega$. Define $\P_A$ so that the domain of $\P_A$ is the Boolean algebra generated by the  atoms $\alpha_r(A)$, and each $\P_A (\alpha_r(A)) = p_r(A)$.

The required coherence of $\cS_0$ follows from the claim that, for all indexes $A\ne B$ in \O, the intersection
$\Dom{\P_A} \cap \Dom{\P_B}$
contains only the empty set and the whole $\Omega$.
To prove the claim, let $e$ be a nonempty event in $\Dom{\P_A} \cap \Dom{\P_B}$.

There are nonempty subsets $X,Y$ of $\EV(A), \EV(B)$ respectively such that
\[e = \bigcup_{r\in X} \alpha_r(A)
    = \bigcup_{s\in Y} \alpha_s(B).\]
Choose any $r\in X$ and any
$f\in \alpha_r(A) = \{g\in \Omega: g(A) = r\}$.
Change the $B^{th}$ component of $f$ to any $s\in \EV(B)$;
the resulting element of $\Omega$ still belongs to $\alpha_r(A)\subseteq e$.
It follows that $Y = \EV(B)$ and therefore $e = \Omega$.

This completes the proof of the theorem.
\end{proof}

The observation space $\cS_0$ constructed in the proof of Theorem~\ref{thm:model} models the given multi-test experiment \E\ in a perfunctory way.
In particular, $\cS_0$ always has a nonnegative grounding; just use the product measure on $\Omega$, i.e., make the results of tests probabilistically independent.
Still, the theorem demonstrates the existence of some observation space for \E.
One does not have to be a proponent of hidden variables to realize that a grounding of an observation space for \E\ can be useful in the mathematical analysis of the experiment.

\section{Grounding some observation spaces}
\label{sec:finite}

Given a multi-test experiment \E, it is often possible to construct and analyze an observation space for \E\ which is more succinct than the observation space $\cS_0$ constructed in the proof of Theorem~\ref{thm:model} and which models \E\ more faithfully.

We  illustrate this point on four examples in subsections \S\ref{sub:f1} -- \ref{sub:hardy}.
The example in \S\ref{sub:f2} is new; the other three
examples come from the literature.
Notice also that the four-point observation space for Piponi's example in \S\ref{sec:obs} is also more succinct than the corresponding $\cS_0$, which would have eight sample points.

Intuitively, the proof of Theorem~\ref{thm:model} takes unfair advantage of the absence, in Definition~\ref{def:model} of modeling, of any correlation between the maps $\mu_A$ for different observables $A$.
A more faithful notion of modeling is provided by the following definition, which correlates the various maps $\mu_A$ in a meaningful way.

\begin{definition}              \label{def:monotone}
  An observation space $\Big(\Omega,\ang{\P_A: A\in\O}\Big)$
  \emph{monotonically models} a multi-test experiment
  $\E = \big(\ket\psi,\O\big)$ over \H\ if it models this experiment
  via maps $\mu_A: \ES(A) \to 2^\Omega$ as in
  Definition~\ref{def:model} and satisfies the following additional requirement.
  \begin{description}
  \item[\tt Monotonicity] If $A,B\in\O$ commute, $X\in\ES(A)$, $Y\in\ES(B)$, and $X\subseteq Y$, then $\mu_A(X)\subseteq\mu_B(Y)$. \qef
  \end{description}
\end{definition}

The idea behind this definition is this. As in the monotonicity requirement, suppose that $A,B\in\O$ commute, and suppose $X = E_r(A) \subseteq E_s(B) = Y$. Then observing the value $r$ for $A$ entails observing the value $s$ for $B$. So a sample point giving $A$ the value $r$ should also give $B$ the value $s$. This is one aspect of what we mean by ``faithful modeling.''

The monotonicity requirement is trivially satisfied if \O\ contains no commuting observables, as will be the case in all four experiments considered below in this section.
But we shall show in Section~\ref{sec:context} that some multi-test experiments cannot be monotonically modeled by observation spaces.

\subsection{Feynman's experiment}
\label{sub:f1}

Let $\H = \C^2$ and recall the Pauli operators given by matrices
\[
X=
\begin{pmatrix}
  0&1\\1&0
\end{pmatrix},\qquad Y=
\begin{pmatrix}
  0&-i\\i&0
\end{pmatrix},\qquad Z=
\begin{pmatrix}
  1&0\\0&-1
\end{pmatrix}.
\]
in the computational basis \ket0, \ket1 of \H.
Each of the Pauli operators $P$ has eigenvalues $\pm1$. The
eigenvectors for $+1$ and $-1$ of $P$ are the eigenvectors for
eigenvalues $1$ and $0$ of operator $(I+P)/2$ and also eigenvectors
for eigenvalues $0$ and $1$ respectively of operator $(I-P)/2$; here
$I$ is, as usual, the identity operator.
If you measure $P$ in state \ket{\psi}, the probability of obtaining
$+1$ in \ket{\psi} is $\left(1+\ang P\right)/2$ and the probability of
obtaining $-1$ in \ket{\psi} is $\left(1-\ang P\right)/2$; here \ang P
is, as usual, the expectation of $P$ in state \ket{\psi}.

For any unit vector \ket{\psi} in \H, consider a multi-test experiment
with two tests in state \ket{\psi}: measuring $Z$ and measuring $X$;
this is essentially the experiment studied by Richard Feynman in
\cite{Feynman}.

The experiment gives rise to an observation space \cS\ with sample
space $\Omega$ and two nonnegative probability distributions $\P_Z$
and $\P_X$ described below.

Since $Z$ and $X$ have eigenvalues $\pm1$, each run of the $Z$ or $X$
test produces $+1$ or $-1$, and the result may be described by $+$ or
$-$ respectively. Accordingly, the sample space $\Omega = \{++, +-,
-+, --\}$ where the first sign in each pair refers to $Z$ and the
second to $X$. The domain of $\P_Z$ is the Boolean algebra of events
generated by events
\[ +* = \{++,+-\}\textrm{\quad and\quad } -* = \{-+,--\}, \]
and the domain of $\P_X$ is the Boolean algebra of events generated by
events
\[*+ = \{++,-+\}\textrm{\quad and\quad } *- = \{+-,--\}.\]

In a given state \ket{\psi}, $\P_Z(+*) = \frac12(1+\ang Z)$ and
$\P_Z(-*) = \frac12(1-\ang Z)$, while $\P_X(*+) = \frac12(1+\ang X)$
and $\P_X(*-) = \frac12(1-\ang X)$.

\medskip\noindent{\tt Grounding problem.} Let $\P$ be an alleged
signed ground probability distribution for observation space
\cS. Using Feynman's notation, we abbreviate $\P(++)$ to $f_{++}$ and
similarly for the other three sample points $+-, -+$ and $--$. We have
\begin{align*}
f_{++}+f_{+-}&= \P_Z(+*) = \frac12(1+\ang Z),\\
f_{-+}+f_{--}&= \P_Z(-*) = \frac12(1-\ang Z),\\
f_{++}+f_{-+}&= \P_X(*+) = \frac12(1+\ang X),\\
f_{+-}+f_{--}&= \P_X(*-) = \frac12(1-\ang X).
\end{align*}
There is a redundancy in the equations. For example, the fourth
equation is obtained by adding the first two and subtracting the
third. But the first three are independent, so there's one free
parameter in the general solution.  In fact, it's easy to write down
the general solution:
\begin{align*}
f_{++}&=\frac14(1+\ang Z+\ang X+t)\\
f_{+-}&=\frac14(1+\ang Z-\ang X-t)\\
f_{-+}&=\frac14(1-\ang Z+\ang X-t)\\
f_{--}&=\frac14(1-\ang Z-\ang X+t),
\end{align*}
where $t$ is arbitrary real number.

Feynman considered the solution where $t$ is the expectation \ang{Y}
in the same state \ket{\psi} of the third Pauli operator
$Y = \begin{sm} 0&-i\\i&0 \end{sm}$.

A question arises whether there is a nonnegative grounding for \cS. The answer is affirmative.
For each state \ket\psi, there is a
choice of $t$ that makes all four components of $f$ nonnegative.

Indeed, write down the four inequalities $f_{\pm\pm}\geq0$ using the
formulas above for these $f_{\pm\pm}$'s.  Solve each one for $t$. You
find two lower bounds on $t$, namely
\begin{align*}
-1-\ang Z-\ang X &\text{ (from $f_{++}\ge0$)}\\
-1+\ang Z+\ang X &\text{ (from  $f_{--}\ge0$)},
\end{align*}
and two upper bounds, namely
\begin{align*}
1+\ang Z-\ang X &\text{ (from $f_{+-}\ge0$)}\\
1-\ang Z+\ang X &\text{ (from $f_{-+}\ge0$)}.
\end{align*}
An appropriate $t$ exists if and only if both of the lower bounds are
less than or equal to both of the upper bounds. That gives four
inequalities, which simplify to $-1\leq\ang Z\leq 1$ and $-1\leq\ang
X\leq 1$.  But these are always satisfied, because the eigenvalues of
$Z$ and $X$ are $\pm1$.
Thus, if
\begin{align*}
m &= \max\left(-1-\ang Z-\ang X, -1+\ang Z+\ang X\right),\\
M &= \min\left(1+\ang Z-\ang X, 1-\ang Z+\ang X\right),
\end{align*}
then $m\le M$ and the general solution above yields a nonnegative
grounding for the observation space \cS\ if and only if
$m\le t\le M$.

\subsection{Three-test version of Feynman's experiment}
\label{sub:f2}

Again, there is a source that repeatedly produces single qubits, all
in the same state \ket{\psi}, but this time around there are three
tests --- $X$, $Y$ and $Z$ --- which can be performed on each of these
qubits. Test $X$ (resp.,\ $Y$ or $Z$) is measuring the Pauli operator
$X$ (resp.,\ $Y$ or $Z$) in state \ket{\psi}.

This time around, the sample points of our observation space $\cS$ are
represented by words of length three in the alphabet $\{-,+\}$ where
the signs in each triple refer to $X$, $Y$ and $Z$
respectively. Accordingly, the sample space
\begin{equation*}
\Omega = \{---, --+, -+-, -++, +--, +-+, ++-, +++\}.
\end{equation*}
Enumerating these triples in the given lexicographic order, we obtain
an alternative representation
$ \Omega = \{0, 1, 2, 3, 4, 5, 6, 7\} $.

The domain of the probability distribution $\P_X$, corresponding to
test $X$, is the Boolean algebra generated by two events
$-**$ and $+**$ where $*$ stands for either $+$ or $-$,
and similarly for the domains of $\P_Y$ and $P_Z$.

{\tt Grounding problem.} Let $\P$ be an alleged signed ground
probability distribution for \cS.
Abbreviating $\P(---), \P(--+), \dots, \P(+++)$ to $f_0, f_1, \dots,
f_7$ respectively, and abbreviating \ang{X}, \ang{Y}, \ang{Z} to
$x,y,z$ respectively, we have
\begin{align}
f_0 + f_1 + f_2 + f_3 &= \P_X(-**)=(1-x)/2\\
f_4 + f_5 + f_6 + f_7 &= \P_X(+**)=(1+x)/2\\
f_0 + f_1 + f_4 + f_5 &= \P_Y(*-*)=(1-y)/2\\
f_2 + f_3 + f_6 + f_7 &= \P_Y(*+*)=(1+y)/2\\
f_0 + f_2 + f_4 + f_6 &= \P_Z(**-)=(1-z)/2\\
f_1 + f_3 + f_5 + f_7 &= \P_Z(**+)=(1+z)/2
\end{align}

It is easy to see that this system of six equations is equivalent to
the following system of four equations obtained by removing equations
(1), (3) and (5) and adding a new equation (0).
\begin{align}
f_0 + f_1 + f_2 + f_3 + f_4 + f_5 + f_6 + f_7 &=1\tag{0}\\
f_4 + f_5 + f_6 + f_7 &= (1+x)/2 \tag{2}\\
f_2 + f_3 + f_6 + f_7 &= (1+y)/2 \tag{4}\\
f_1 + f_3 + f_5 + f_7 &= (1+z)/2 \tag{6}
\end{align}
Indeed (0) follows from (1) and (2), while (2), (4) and (6) imply (1),
(3) and (5) respectively in the presence of (0).

Viewing $f_3, f_5, f_6, f_7$ as parameters allows us to give a general
solution of the grounding problem:
\begin{align*}
f_1 &= (1+z)/2 - f_3 - f_5 - f_7\\
f_2 &= (1+y)/2 - f_3 - f_6 - f_7\\
f_4 &= (1+x)/2 - f_5 - f_6 - f_7\\
f_0 &= 1 - (1+x)/2 - (1+y)/2 - (1+z)/2 + f_3 + f_5 + f_6 + 2f_7
\end{align*}

\begin{theorem}\label{thm:f}
For every state \ket{\psi}, there is a nonnegative grounding $\P$.
\end{theorem}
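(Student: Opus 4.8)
\noindent\emph{Proof proposal.}
The plan is to write down an explicit nonnegative grounding. The natural candidate is the product (independence) measure already mentioned in the discussion after Theorem~\ref{thm:model}, but taken directly on the eight-point sample space $\Omega=\{+,-\}^3$, which is itself a product of three two-element sets. Writing $x=\ang X,\ y=\ang Y,\ z=\ang Z$ as before, reading the signs $+,-$ as $+1,-1$, and letting the three coordinates of a sample point $s=(s_X,s_Y,s_Z)$ refer to the tests $X,Y,Z$ respectively, I would set
\[
\P(s)\;=\;\frac{1+s_X x}{2}\cdot\frac{1+s_Y y}{2}\cdot\frac{1+s_Z z}{2},
\]
extended additively to the power set of $\Omega$.

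First I would note that $x,y,z\in[-1,1]$: each is the expectation in the unit vector $\ket\psi$ of a Pauli operator, whose operator norm is $1$ (equivalently $x^2+y^2+z^2\le1$ for a qubit, which is more than enough). Hence each of the three factors above lies in $[0,1]$, so $\P(s)\ge0$ for every $s\in\Omega$, i.e.\ $\P$ is nonnegative in the sense of Definition~\ref{def:sp}. Next I would check that $\P$ is a grounding. Its total mass $\P(\Omega)$ is the product over the three coordinates of $\tfrac{1+e}{2}+\tfrac{1-e}{2}=1$, hence $\P(\Omega)=1$, so $\P$ is a signed probability distribution. Summing the displayed formula over the two coordinates other than the $X$-coordinate collapses those factors to $1$ and leaves $\P(+**)=\tfrac12(1+x)=\P_X(+**)$ and $\P(-**)=\tfrac12(1-x)=\P_X(-**)$; since $\Dom{\P_X}$ is the Boolean algebra generated by $+**$ and $-**$, this is exactly the statement that $\P_X$ is the restriction of $\P$ to $\Dom{\P_X}$. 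The verifications for $\P_Y$ and $\P_Z$ are identical after permuting coordinates. In terms of the general solution displayed just before the theorem, this $\P$ is the one with $f_7=\tfrac18(1+x)(1+y)(1+z)$, $f_3=\tfrac18(1-x)(1+y)(1+z)$, $f_5=\tfrac18(1+x)(1-y)(1+z)$, $f_6=\tfrac18(1+x)(1+y)(1-z)$; a short computation confirms that the formulas for $f_0,f_1,f_2,f_4$ then return the remaining sign-products $\tfrac18(1\pm x)(1\pm y)(1\pm z)$ with the matching signs, so one could equivalently just verify those four identities by hand.

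I do not expect a real obstacle in proving the theorem as stated: the only points needing attention are the coordinate-to-test bookkeeping and the (elementary) bound $|x|,|y|,|z|\le1$. The substantive question --- characterizing \emph{all} nonnegative groundings, equivalently the set of admissible parameter quadruples $(f_3,f_5,f_6,f_7)\in\R^4$ --- is a separate matter, presumably taken up after this theorem; it amounts to intersecting the affine four-parameter family of groundings with the positive orthant of $\R^8$, and that is where the real work lies.
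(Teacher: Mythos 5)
Your proposal is correct, but it proves the theorem by a different route than the paper. You exhibit an explicit nonnegative grounding --- the product measure $\P(s)=\prod\frac{1+s_\cdot\langle\cdot\rangle}{2}$ --- and verify directly that its one-dimensional marginals match $\P_X,\P_Y,\P_Z$. The paper instead argues non-constructively: it observes that the point $q=\bigl((1+x)/2,(1+y)/2,(1+z)/2\bigr)$ lies in the unit cube, that the cube is the convex hull of its eight vertices, and that barycentric coordinates of $q$ with respect to those vertices are exactly nonnegative solutions of the grounding equations. Your construction trades the convexity argument for a concrete formula; the paper's convexity argument buys, essentially for free, the observation (stated right after the proof) that the nonnegative groundings form a four-dimensional polytope rather than a single point. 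Interestingly, your explicit formula is precisely what the paper then presents under the heading ``An explicit canonical nowhere negative solution of the grounding problem,'' with the accompanying Claim verifying that the product formula satisfies equations (1)--(6). So you have, in effect, used the paper's canonical solution (together with its verification) as the proof of the existence theorem, rather than first proving existence abstractly and then exhibiting the canonical witness. Both are clean; yours is shorter and self-contained, the paper's makes the polytope structure of the solution set visible.

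One small remark on your side discussion: you note $x^2+y^2+z^2\le1$ as ``more than enough.'' That is true and in fact an equality for pure states, but all the argument needs is $|x|,|y|,|z|\le1$, which follows immediately from each Pauli operator having norm $1$; the paper uses exactly this weaker bound. Your coordinate-to-index bookkeeping for $f_3,f_5,f_6,f_7$ is consistent with the paper's lexicographic enumeration of $\{-,+\}^3$.
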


\begin{proof}
View the triple $(x,y,z)$ as a point in a three-dimensional real
vector space. Since $-1 \le x\le 1$, we have
$0 \le (1 + x)/2 \le 1$, and the same holds for $y$ and $z$.
Thus the point
\[ q = \big((1+x)/2,(1+y)/2,(1+z)/2\big) \]
lies in the cube $C$ with corners
\begin{align*}
&p_0=(0,0,0),\ p_1=(0,0,1),\ p_2=(0,1,0),\ p_3=(0,1,1),\\\
&p_4=(1,0,0),\ p_5=(1,0,1),\ p_6=(1,1,0),\ p_7=(1,1,1).
\end{align*}
(In fact, point $(x,y,z)$ lies on the unit sphere around $p_0$, so
that point $q$ lies on the sphere of radius $1/2$ inscribed in the
cube $C$, but this is not important for the current proof.)

For each $i = 0, 1, \dots, 7$, let $\vec p_i$ be the vector from $p_0$
to $p_i$, and let $\vec q$ be the vector from $p_0$ to $q$.
Since the cube $C$ is the convex closure of the points $p_0, \dots,
p_7$, there are nonnegative coefficients $f_0, \dots, f_7$ such that
\begin{equation}
\sum_{i=0}^7f_i = 1\quad\text{and}\quad
\sum_{i=0}^7 f_i \vec p_i = \vec q \tag{F}
\end{equation}
It is easy that (F) is equivalent to the system of equations (0),(2),(4),(6). Indeed, if we project the vector equation in (F) to the three axes, we get
\begin{align*}
0f_0 + 0f_1 + 0f_2 + 0f_3 +
1f_4 + 1f_5 + 1f_6 + 1f_7 &= (1+x)/2\\
0f_0 + 0f_1 + 1f_2 + 0f_3 +
0f_4 + 1f_5 + 1f_6 + 1f_7 &= (1+y)/2\\
0f_0 + 1f_1 + 0f_2 + 1f_3 +
0f_4 + 1f_5 + 0f_6 + 1f_7 &= (1+z)/2
\qedhere
\end{align*}
\end{proof}
The proof gives a four-dimensional polytope of nowhere negative
solutions of the grounding problem. The average of these solutions is
also a nowhere negative solution.

\subsubsection*{An explicit canonical nowhere negative solution of the
  grounding problem.}
Recall that each $i\in\{0,1,\dots,7\}$ represents a string $a_ib_ic_i$
in alphabet $\{-,+\}$. Define
\[f_i = \frac{1\pm x}2 \times \frac{1\pm y}2 \times \frac{1\pm z}2\]
where the signs of $x,y$ and $z$ are $a_i, b_i$ and $c_i$
respectively.

\begin{claim}
Coefficients $f_i$ satisfy equations (1)--(6).
\end{claim}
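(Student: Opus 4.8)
The plan is to recognize the proposed numbers $f_i$ as the product distribution in which the three tests are made probabilistically independent, and then to verify each of the six equations by a single application of the distributive law. The only arithmetic fact needed is the elementary identity $\tfrac{1-x}{2}+\tfrac{1+x}{2}=1$ together with its analogues for $y$ and $z$; in particular the bounds $-1\le x,y,z\le 1$ play no role in this verification.

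First I would match the index set of each equation to a sign pattern. Writing $a_ib_ic_i$ for the string represented by $i$, the indices $0,1,2,3$ occurring in equation~(1) are precisely those $i$ with $a_i=-$, whereas $4,5,6,7$ in~(2) are those with $a_i=+$; likewise (3) and (4) collect the $i$ with $b_i=-$ and with $b_i=+$, and (5) and (6) collect the $i$ with $c_i=-$ and with $c_i=+$. Thus in each of the six equations the left-hand side is $\sum f_i$ taken over all $i$ for which one designated position carries a prescribed sign, while the other two positions range freely over $\{-,+\}$.

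For such a sum the factor attached to the fixed position is common to all four terms and pulls out. Take equation~(1): its left-hand side is $\tfrac{1-x}{2}\sum \tfrac{1\pm y}{2}\cdot\tfrac{1\pm z}{2}$, the sum being over the four sign choices for $y$ and $z$, and this sum factors as $\bigl(\tfrac{1-y}{2}+\tfrac{1+y}{2}\bigr)\bigl(\tfrac{1-z}{2}+\tfrac{1+z}{2}\bigr)=1\cdot 1=1$. Hence the left-hand side equals $\tfrac{1-x}{2}$, which is exactly~(1). Equations (2)--(6) are handled identically, permuting the roles of $x,y,z$ and allowing the prescribed sign to be $+$ as well as $-$.

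I do not expect a genuine obstacle here: the substance of the claim is the bookkeeping in the preceding paragraph, after which each equation collapses in one line. I would add one remark: since equations (1)--(6) were shown to be equivalent to (0),(2),(4),(6), and the general solution of the latter was already parametrized by $f_3,f_5,f_6,f_7$, these explicit products are simply the member of that family obtained by taking $f_3,f_5,f_6,f_7$ to be the corresponding products of $\tfrac{1\pm x}{2},\tfrac{1\pm y}{2},\tfrac{1\pm z}{2}$; so the claim also supplies a fully symmetric, closed-form witness re-proving Theorem~\ref{thm:f}.
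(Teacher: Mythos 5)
Your proof is correct and uses the same essential idea as the paper: expand the left-hand side of each equation, pull out the common factor attached to the constrained coordinate, and collapse the remaining sum to $1$ via the identity $\tfrac{1-y}{2}+\tfrac{1+y}{2}=1$ (and its analogue in $z$). The paper performs this grouping term by term for equation~(1) and appeals to symmetry for the rest, whereas you phrase the same computation more compactly by observing that the $f_i$ form a product distribution and factoring the inner sum as $\bigl(\tfrac{1-y}{2}+\tfrac{1+y}{2}\bigr)\bigl(\tfrac{1-z}{2}+\tfrac{1+z}{2}\bigr)=1$ in one step; the closing remark connecting the explicit $f_i$ to the parametrization of Theorem~\ref{thm:f} is a nice but inessential addition.
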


\begin{proof}
By symmetry, it suffices to prove that equations (1) and (2) are
satisfied. We prove (1):
\begin{align*}
& \left(\frac{1-x}2  \frac{1-y}2  \frac{1-z}2
+ \frac{1-x}2  \frac{1-y}2  \frac{1+z}2\right)
+ \left(\frac{1-x}2  \frac{1+y}2  \frac{1-z}2
+ \frac{1-x}2  \frac{1+y}2  \frac{1+z}2\right)\\
&= \left(\frac{1-x}2  \frac{1-y}2\right)
+ \left(\frac{1-x}2  \frac{1+y}2\right) = \frac{1-x}2
\end{align*}
The intended proof of (2) is similar; just replace $\frac{1-x}2$ with
$\frac{1+x}2$ above.
\end{proof}

\subsection{Schneider's experiment}
\label{sub:schneider}

We formalize the experiment described in David Schneider's blog post \cite{Schneider}, model it by means of an observation space, and then analyze the observation space. Schneider's experiment is a version of the Einstein-Podolsky-Rosen-Bohm thought experiment \cite{EPR,Bohm}.
Figure~\ref{fig:1}, together with the caption, is from article
\cite{Aspect}.

\begin{figure}[H]
\includegraphics[scale=1,trim=.6in 9.5in 0 0.3in,clip]{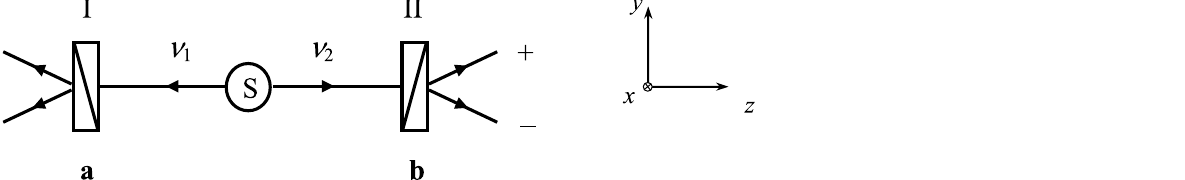}
\caption{The two photons, emitted in state \ket{\psi} are analyzed by
  linear polarizers in orientations $\bf a$ and $\bf b$. One can
  measure the probabilities of single or joint detections in the
  output channels of the polarizers.}
\label{fig:1}
\end{figure}

\noindent
In the figure, the two photons are moving along the $z$ axis. The pure
state $\ket\psi = \big(\ket{00} + \ket{11}\big)/\sqrt2$. The
computational basis vectors \ket0 and \ket1 correspond to polarization
in the directions of the $x$ and $y$ axes respectively%
\footnote{For polarization states of photons, the standard basis vectors correspond to vertical and horizontal polarization, in contrast to spin-$\frac12$ particles, whose standard basis vectors correspond to spin up and spin down.}

\begin{proposition}[\cite{PeresB}, \S6.2]\label{pro:peres}
If $\theta$ is the angle $\angle(\alpha,\beta)$ between the
orientations $\alpha$ and $\beta$ of the two polarizers in Figure~1,
then the outcomes $(+1,+1)$ and $(-1,-1)$, in which the two
measurements give us the same result, have probability
$\frac12\cos^2\theta$ each, and the outcomes $(+1,-1)$ and $(-1,+1)$
have probability $\frac12\sin^2\theta$ each.
\end{proposition}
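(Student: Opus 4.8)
The plan is to read off the four joint probabilities directly from the Born rule. First I would fix the measurement conventions dictated by the setup: since \ket0 and \ket1 represent polarization along the $x$ and $y$ axes, a linear polarizer in orientation $\alpha$ (the angle measured from the $x$ axis in the transverse plane) is the projective measurement on $\C^2$ whose $+1$ eigenvector is $\ket\alpha=\cos\alpha\,\ket0+\sin\alpha\,\ket1$ and whose $-1$ eigenvector is $\ket{\alpha^\perp}=-\sin\alpha\,\ket0+\cos\alpha\,\ket1$, and similarly for orientation $\beta$. The joint measurement performed by the two polarizers is then the projective measurement on $\C^2\ox\C^2$ with the four rank-one projectors $\ketbra\alpha\alpha\ox\ketbra\beta\beta$, $\ketbra\alpha\alpha\ox\ketbra{\beta^\perp}{\beta^\perp}$, $\ketbra{\alpha^\perp}{\alpha^\perp}\ox\ketbra\beta\beta$, and $\ketbra{\alpha^\perp}{\alpha^\perp}\ox\ketbra{\beta^\perp}{\beta^\perp}$, corresponding to outcomes $(+1,+1)$, $(+1,-1)$, $(-1,+1)$, $(-1,-1)$.

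Next I would apply the Born rule: the probability of $(+1,+1)$ in state \ket\psi is $\big|\braket{\alpha\beta}{\psi}\big|^2$, where $\ket{\alpha\beta}=\ket\alpha\ox\ket\beta$, and similarly for the other three outcomes. Since $\ket\psi=\tfrac1{\sqrt2}\big(\ket{00}+\ket{11}\big)$ and all the relevant inner products are real ($\braket{\alpha}{0}=\cos\alpha$, $\braket{\alpha}{1}=\sin\alpha$, $\braket{\beta}{0}=\cos\beta$, $\braket{\beta}{1}=\sin\beta$), a one-line expansion gives
\[
\braket{\alpha\beta}{\psi}=\tfrac1{\sqrt2}\big(\cos\alpha\cos\beta+\sin\alpha\sin\beta\big)=\tfrac1{\sqrt2}\cos(\alpha-\beta)=\tfrac1{\sqrt2}\cos\theta,
\]
so $(+1,+1)$ has probability $\tfrac12\cos^2\theta$. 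Replacing $\ket\beta$ by $\ket{\beta^\perp}$ yields $\braket{\alpha\beta^\perp}{\psi}=\tfrac1{\sqrt2}\sin(\alpha-\beta)$, hence $(+1,-1)$ has probability $\tfrac12\sin^2\theta$; the analogous computations with $\ket{\alpha^\perp}$ give $\tfrac12\sin^2\theta$ for $(-1,+1)$ and $\tfrac12\cos^2\theta$ for $(-1,-1)$. (One could instead get the last two from the first two by noting that \ket\psi\ is symmetric under swapping the two tensor factors and that the four probabilities sum to $1$.)

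I do not expect a genuine obstacle here; the argument is an elementary Born-rule computation. The only points requiring care are (i) pinning down the polarizer convention so that the computational basis really matches the $x,y$ polarization directions named in the setup, and (ii) checking that the answer depends only on the relative angle $\theta=\angle(\alpha,\beta)$, not on $\alpha$ and $\beta$ individually — which is the rotational invariance of the Bell state $\ket\psi$ (it is fixed up to phase by $R\ox R$ for every planar-rotation-induced $R$ on $\C^2$), and which one may invoke to reduce to the case $\beta=0$ for an even shorter write-up.
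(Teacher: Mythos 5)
The paper does not give a proof of this proposition at all; it is stated as a citation to Peres~\cite{Peres}, \S6.2, and used as a black box in \S\ref{sub:schneider}. Your Born-rule computation is correct and is essentially the standard textbook derivation: with $\ket\alpha=\cos\alpha\ket0+\sin\alpha\ket1$ one gets $\braket{\alpha\beta}{\psi}=\tfrac1{\sqrt2}\cos(\alpha-\beta)$, and the other three amplitudes $\pm\tfrac1{\sqrt2}\sin(\alpha-\beta)$, $\tfrac1{\sqrt2}\cos(\alpha-\beta)$ follow, giving exactly the stated $\tfrac12\cos^2\theta$ and $\tfrac12\sin^2\theta$. The result also checks against the numerical values the paper derives from it (e.g.\ $\P_{AB}\{0,1\}=\tfrac12\cos^2(\pi/4)=\tfrac14$ and $\P_{BC}\{0,4\}=\tfrac12\cos^2(\pi/8)=\tfrac14+\tfrac{\sqrt2}{8}$). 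Your closing remark that $\ket\psi=(\ket{00}+\ket{11})/\sqrt2$ is fixed up to phase by $R\otimes R$ for every real planar rotation $R$ is correct (it is this state's invariance under the orthogonal group, not the full unitary group) and cleanly explains why only the relative angle $\theta$ enters.
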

\noindent
In particular, if $\alpha=\beta$ then the probability of getting the
same result is 1.

Our experiment involves three orientations $A, B, C$ with angles
\[ \angle(A, B) = \pi/4,\quad \angle(B, C) = \pi/8,\quad \angle(A, C)
  = 3\pi/8. \]
Let Alice and Bob manage the left and the right polarizers respectively.
To perform one run of the experiment, Alice gives her polarizer one of
the three orientations, and Bob gives his polarizer one of the three
orientations.

A priori this gives rise to nine tests: $(A,A), (A,B), \dots, (C,C)$.
But the three tests where Alice and Bob choose the same orientation
are not very interesting. We are going to ignore the runs of the
experiment with any of the three uninteresting tests. The remaining
six tests split into three groups of dual tests: $(A,B)$ with $(B,A)$,
$(B,C)$ with $(C,B)$, and $(A,C)$ with $(C,A)$.

Further, view every pair, $(\alpha,\beta)$ and $(\beta,\alpha)$, of
dual tests as insignificantly different variants of one symmetric test
where one party chooses orientation $\alpha$ and the other party
chooses orientation $\beta$. The symmetric test will be denoted simply
$\alpha\beta$ where the letters are in the lexicographical order. (So
the three symmetric tests are $AB,\ BC$ and $AC$.) The outcome
$(-1,+1)$ of the symmetric test $\alpha\beta$ means that the party
that chose $\alpha$ observed $-1$ while the party that chose $\beta$
observed $+1$. Similarly, the outcome $(+1,-1)$ means that the party
that chose $\alpha$ observed $+1$ and the party that chose $\beta$
observed $-1$.

There is a natural observation space \cS\ that models the
experiment. A sample point is a three-letter word $abc$ in
in the alphabet $\{-,+\}$. The intention is that $a=-$ (resp.\ $a=+$)
if a party choosing orientation $A$ would observe the result $-1$
(resp.\ $+1$). Of course, the same applies also to  $b$ and $c$.
The sample space is
\[\Omega = \{---,\ --+,\ -+-,\ -++,\ +--,\ +-+,\ ++-,\ +++\}. \]
Enumerating these triples in the given lexicographical order, we
obtain an alternative representation
$ \Omega = \{0, 1, 2, 3, 4, 5, 6, 7\} $.

Test $AB$ gives rise to a probability distribution $\P_{AB}$ whose
domain is the Boolean algebra of events generated by the four events
\begin{align*}
 --* &= \{---,\:--+\},\quad -+* = \{-+-,\:-++\},\\
 +-* &= \{+--,\:+-+\},\quad ++* = \{++-,\:+++\}
\end{align*}
determined by the values of $a$ and $b$. Similarly for the other two tests.
According to Proposition~\ref{pro:peres},
\begin{align*}\label{AB}
\P_{AB}(--*) &= \P_{AB}\{0,1\} = (1/2) \cos^2(\pi/4) = 1/4,\\
\P_{AB}(-+*) &= \P_{AB}\{2,3\} = (1/2) \sin^2(\pi/4) = 1/4,\\
\P_{AB}(+-*) &= \P_{AB}\{4,5\} = (1/2) \sin^2(\pi/4) = 1/4,\\
\P_{AB}(++*) &= \P_{AB}\{6,7\} = (1/2) \cos^2(\pi/4) = 1/4,\\[3pt]
\P_{BC}(*--) &= \P_{BC}\{0,4\} = (1/2) \cos^2(\pi/8)
= (1/4) + (\sqrt2/8),\\
\P_{BC}(*-+) &= \P_{BC}\{1,5\} = (1/2) \sin^2(\pi/8)
= (1/4) - (\sqrt2/8), \\
\P_{BC}(*+-) &= \P_{BC}\{2,6\} = (1/2) \sin^2(\pi/8)
= (1/4) - (\sqrt2/8),\\
\P_{BC}(*++) &= \P_{BC}\{3,7\} = (1/2) \cos^2(\pi/8)
= (1/4) + (\sqrt2/8),\\[3pt]
\P_{AC}(-*-) &= \P_{AC}\{0,2\} = (1/2) \cos^2(3\pi/8)
= (1/4) - (\sqrt2/8),\\
\P_{AC}(-*+) &= \P_{AC}\{1,3\} = (1/2) \sin^2(3\pi/8)
= (1/4) + (\sqrt2/8), \\
\P_{AC}(+*-) &= \P_{AC}\{4,6\} = (1/2) \sin^2(3\pi/8)
= (1/4) + (\sqrt2/8)\\
\P_{AC}(+*+) &= \P_{AC}\{5,7\} = (1/2) \cos^2(3\pi/8)
= (1/4) - (\sqrt2/8).
\end{align*}

The coherence requirement is easy to check. $\Dom{\P_{AB}} \cap
\Dom{\P_{AC}}$ is the Boolean algebra generated by two complementary
atoms $-**,\ +**$ determined by the values $a$. These atoms should
have probability $1/2$ with respect to both, $\P_{AB}$ and $\P_{AC}$,
and they do:
\begin{align*}
\P_{AB}(-**) &= \P_{AB}\{0,1,2,3\} = \P_{AC}\{0,1,2,3\} =
\P_{AC}(-**) = 1/2,\\
\P_{AB}(+**) &= \P_{AB}\{4,5,6,7\} = \P_{AC}\{4,5,6,7\} =
\P_{AC}(+**) = 1/2,
\end{align*}
The cases of $\Dom{\P_{AB}} \cap \Dom{\P_{BC}}$ and $\Dom{\P_{AC}} \cap \Dom{\P_{BC}}$ are similar.

\begin{proposition}
The groundings for observation space \cS\ form a family with
one real parameter, and no grounding is nonnegative.
\end{proposition}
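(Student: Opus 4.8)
The plan is to treat the grounding problem as a system of linear equations, exactly as in \S\ref{sub:f1} and \S\ref{sub:f2}. Let $\P$ be an alleged grounding and abbreviate its values on the eight sample points to $f_0,\dots,f_7$, indexed as in the text. Then $\P$ is a grounding precisely when $(f_0,\dots,f_7)$ solves the twelve linear equations obtained by writing $\P$ in place of $\P_{AB}$, $\P_{BC}$, and $\P_{AC}$ in the probability table displayed just before the proposition --- four equations from each distribution. I would first note that these twelve equations are far from independent: in each of the three groups the four equations add up to $\sum_i f_i = 1$, and the three groups agree on the one-coordinate events, e.g.\ $\P_{AB}(-**)=\P_{AC}(-**)=\tfrac12$ (the consistency already verified in the text). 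To pin down the dimension of the solution set, look at the homogeneous system: if all three pairwise ``marginals'' of a real function $g$ on $\{-,+\}^3$ vanish, then $\sum_c g(\cdot,\cdot,c)=0$ makes $g$ odd in its third argument; then $\sum_b g(\cdot,b,\cdot)=0$ makes it odd in the second; then $\sum_a g(a,\cdot,\cdot)=0$ makes it odd in the first; hence $g$ is a scalar multiple of the function $i\mapsto(-1)^{n_i}$, where $n_i$ is the number of $+$'s in the triple $i$. So the homogeneous solution space is one-dimensional, and since the prescribed quantum marginals are mutually consistent (equal totals, agreement on one-coordinate events) the inhomogeneous system is solvable; its solution set is therefore a one-parameter affine family. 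As in \S\ref{sub:f2} I would also exhibit the general solution explicitly in terms of one free real parameter. This proves the first assertion.

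For the second assertion I would use a three-setting Bell inequality. For every sample point, $a=b$ and $b=c$ force $a=c$, so the event $\{a\ne c\}$ is contained in $\{a\ne b\}\cup\{b\ne c\}$. All three of these events are coobservable, with probabilities read off the table: $\{a\ne b\}\in\Dom{\P_{AB}}$ has probability $\P_{AB}(-+*)+\P_{AB}(+-*)=\tfrac12$; $\{b\ne c\}\in\Dom{\P_{BC}}$ has probability $\P_{BC}(*+-)+\P_{BC}(*-+)=\tfrac12-\tfrac{\sqrt2}{4}$; and $\{a\ne c\}\in\Dom{\P_{AC}}$ has probability $\P_{AC}(-*+)+\P_{AC}(+*-)=\tfrac12+\tfrac{\sqrt2}{4}$. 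Every grounding assigns these same values. A nonnegative grounding is an ordinary probability distribution, so it would obey the union bound $\P(a\ne c)\le\P(a\ne b)+\P(b\ne c)$, that is $\tfrac12+\tfrac{\sqrt2}{4}\le 1-\tfrac{\sqrt2}{4}$, that is $\sqrt2\le 1$ --- absurd. Concretely, adding the three relevant equations of the system shows $2\big(f_{-+-}+f_{+-+}\big)=\P(a\ne b)+\P(b\ne c)-\P(a\ne c)=\tfrac12-\tfrac{\sqrt2}{2}<0$, so under every grounding the event $\{-+-,\,+-+\}$ has negative probability; in particular no grounding is nonnegative. (A signed grounding survives precisely because the union bound may fail for it.)

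The computations here are routine. The only real idea is spotting which combination of the three measured quantities --- equivalently, the pair of sample points $-+-$ and $+-+$ --- is forced negative; this is exactly the three-setting Bell inequality, and the angles $\angle(A,B)=\pi/4$, $\angle(B,C)=\pi/8$, $\angle(A,C)=3\pi/8$, for which $\angle(A,B)+\angle(B,C)=\angle(A,C)$, are chosen exactly so that the inequality fails. The bookkeeping that reduces the twelve equations to a rank-seven system is the only other mildly delicate point, and it is mechanical.
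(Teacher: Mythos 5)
Your argument is correct, and it lands on the same critical event $\{-+-,\,+-+\}$ (the paper's $\{2,5\}$) with the same value $1/4-\sqrt2/4<0$, but by a genuinely different route. The paper's proof is entirely explicit: it sets $t=\P(---)$, solves for the other seven values of $\P$ one sample point at a time from the constraint table, notes that the resulting one-parameter family is consistent with $\P_{AB},\P_{BC},\P_{AC}$ for every $t$, and then reads off $\P\{2,5\}=1/4-\sqrt2/4$ directly from the formulas. You instead treat the two assertions separately and more conceptually: for the dimension count you identify the kernel of the marginalization map abstractly as the span of the parity function $i\mapsto(-1)^{n_i}$, which is cleaner and more structural than an explicit parameterization; for nonnegativity you invoke the three-setting Bell inequality $\{a\ne c\}\subseteq\{a\ne b\}\cup\{b\ne c\}$ together with the union bound, which explains in advance why a negatively weighted coobservable event must exist and which one it is, and why the angles $\pi/4,\pi/8,3\pi/8$ with $\angle(A,B)+\angle(B,C)=\angle(A,C)$ are chosen to make the inequality fail, rather than only exhibiting the negative value after the fact. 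The one place you gloss is the inference from pairwise-marginal consistency to solvability of the inhomogeneous system; that inference is correct (the image of the marginalization map has dimension $8-1=7$, which matches the dimension of the consistent-marginal subspace, so they coincide), but it requires a short cokernel count or, as you note, the explicit solution you propose to exhibit anyway. The paper's single $t$-parameterization settles solvability, dimension, and negativity in one pass, which is the price your more conceptual split pays.
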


\begin{proof}
Suppose that $\P$ is a grounding for \cS, and let
$t=\P(0)$. We have
\begin{align*}
\P(1) &= 1/4 - t
 &&\text{because }\P_{AB}\{0,1\} = 1/4\\
\P(2) &= 1/4 - (\sqrt2/8) - t
 &&\text{because }\P_{AC}\{0,2\} = (1/4) - (\sqrt2/8)\\
\P(3) &= (\sqrt2/8) + t
 &&\text{because }\P_{AB}\{2,3\} = 1/4\\
\P(4) &= 1/4 + (\sqrt2/8) - t
 &&\text{because }\P_{BC}\{0,4\} = (1/4) + (\sqrt2/8) \\
\P(5) &= -(\sqrt2/8) + t
 &&\text{because }\P_{AB}\{4,5\} = 1/4\\
\P(6) &= t
 &&\text{because }\P_{BC}\{2,6\} = (1/4) - (\sqrt2/8)\\
\P(7) &= 1/4 - t
 &&\text{because }\P_{AB}\{6,7\} = 1/4
\end{align*}
It is easy to check that, for any value of $t$, $\P$ is consistent
with $\P_{AB}$, $\P_{BC}$ and $\P_{AC}$. For no value of $t$ is $\P$ nonnegative, because
\[ \P\{2,5\} = (1/4) - (\sqrt2)/4 < 0. \qedhere \]
\end{proof}

\subsection{Hardy's experiment}
\label{sub:hardy}

Lucien Hardy described an interesting approach to non-locality in
\cite{Hardy}, and David Mermin elaborated it in \cite{Mermin}. The
following thought experiment is based on an example in \cite{Mermin}.

Two one-qubit particles emerge from a common source heading in
opposite directions toward distant detectors. Alice and Bob manage the
left and right detector respectively.
Aside from the passage of the particles from the source to the
detectors, there are no connections between Alice, Bob, and the
source.
Figure~\ref{fig:hardy} is from \cite{Mermin}.

\begin{figure}[H]
\includegraphics[scale=0.5,trim=.85in 8.6in 0in 1.3in,clip]{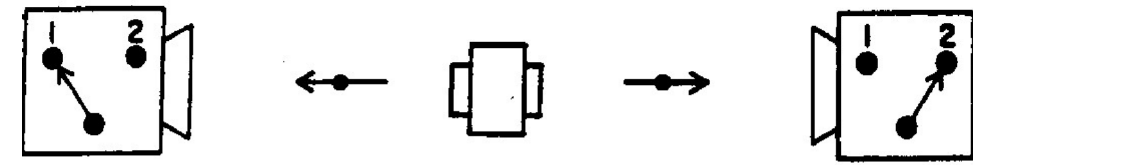}
\caption{}
\label{fig:hardy}
\end{figure}

\noindent
Ahead of each run of the experiment, Alice and Bob set their detectors
arbitrarily to one of two modes, indicated by ``1'' and ``2'' in the
figure.
When a particle arrives at a detector, the detector performs a
measurement and exhibits the result. In mode~1, Pauli observable $Z$
is measured,
and in mode~2, Pauli observable $X$ is measured.
Initially, the two particles are in the entangled state
\begin{equation}\label{zz}
 \ket\psi = \frac1{\sqrt3}\Big(\ket{01}+\ket{10}-\ket{00}\Big).
\end{equation}
Thus we have a four-test experiment $\E = \big(\ket\psi,\{Z\ox Z, Z\ox X, X\ox Z, X\ox X\}\big)$, performed jointly by Alice and Bob. The corresponding tests will be denoted $ZZ, ZX, XZ$ and $XX$.

The initial state is given to us in the computational basis
$\{\ket{00}, \ket{01}, \ket{10}, \ket{11}\}$. It will be convenient to
express it in three additional bases. Recall that $\ket+ =
(\ket0+\ket1)/\sqrt2$ and $\ket- = (\ket0 - \ket1) /\sqrt2$, and
therefore $\ket0 = (\ket+ + \ket-)/\sqrt2$ and $\ket1 = (\ket+ -
\ket-)/\sqrt2$.
We have
\[
\sqrt3\ket\psi = \ket0(\ket1-\ket0) + \ket{10}
 = \ket0(\ket1-\ket0) + \ket1(\ket+ + \ket-)/\sqrt2,
\]
and so, in basis $\{\ket{0+}, \ket{0-}, \ket{1+}, \ket{1-}\}$,
\begin{equation}\label{zx}
\ket\psi = -\frac{\sqrt2}{\sqrt3}\ket{0-} + \frac1{\sqrt6}\ket{1+} +
\frac1{\sqrt6}\ket{1-}.
\end{equation}
By the symmetry of \eqref{zz},
\begin{equation}\label{xz}
\ket\psi = -\frac{\sqrt2}{\sqrt3}\ket{-0}
+ \frac1{\sqrt6}\ket{+1}
+ \frac1{\sqrt6}\ket{-1}.
\end{equation}
Finally, in basis $\{\ket{++}, \ket{+-}, \ket{-+}, \ket{--}\}$, we have
\begin{equation}\label{xx}
\ket\psi
= \frac1{2\sqrt3}\ket{++} - \frac1{2\sqrt3}\ket{+-} -
  \frac1{2\sqrt3}\ket{-+} - \frac{\sqrt3}2\ket{--}.
\end{equation}

Next we describe an observation space \cS, starting with the sample
space $\Omega$.
In each run of the experiment, Alice chooses $Z$ or $X$, and so does Bob.
When either of them measures one of the observables ($Z$ or $X$), the result of the measurement is either $+1$ or $-1$, and (it will be useful to assume that) the detector exhibits $0$ or $1$ respectively.

The sample space $\Omega$ has 16 \emph{a priori} possible sample
points represented by binary strings
\[
Z_A X_A Z_B X_B
\]
where  $Z_A,Z_B$ are the symbols exhibited after measuring $Z$ by
Alice or Bob respectively, and $X_A,X_B$ are the symbols exhibited
after measuring $X$ by Alice or Bob respectively.

The four tests give rise to four probability distributions $\P_{ZZ},\
\P_{ZX},\ \P_{XZ}$ and $\P_{XX}$. The domain of each of these
distributions consists of the events that can be detected by the
corresponding test. Each of these four domains is a Boolean algebra of
sample points generated by four atoms. The probabilities are assigned
to the atoms in accordance with \eqref{zz}--\eqref{xx} respectively.

\begin{align}
&\text{\small Distributions}
&&\qquad\text{\small Atoms}
&&\quad\text{\small Probabilities}\phantom{mmmmmmmmm}
\nonumber\\
&\P_{ZZ}  &&\hspace{-10pt} 0*0*,\ 0*1*,\ 1*0*,\ 1*1*
          && 1/3,\  1/3,\  1/3,\  0\label{zz2} \\
&\P_{ZX}  &&\hspace{-10pt} 0**0,\ 0**1,\ 1**0,\ 1**1
          && 0,\ 2/3,\ 1/6,\ 1/6\label{zx2} \\
&\P_{XZ}  &&\hspace{-10pt} *00*,\ *01*,\ *10*,\ *11*
          && 0,\ 1/6,\ 2/3,\ 1/6\label{xz2} \\
&\P_{XX}  &&\hspace{-10pt} *0*0,\ *0*1,\ *1*0,\ *1*1
            \phantom{m}
          && 1/12,\ 1/12,\ 1/12,\ 3/4\label{xx2}
\end{align}
where $*$ means either of the binary digits.

To check the coherence requirement, we examine all six pairwise
intersections of our four domains.
$\Dom{\P_{ZZ}} \cap \Dom{\P_{XX}} = \Dom{\P_{ZX}} \cap \Dom{\P_{XZ}} =
\{ \emptyset, \Omega \}$.

$\Dom{\P_{ZZ}} \cap \Dom{\P_{ZX}}$ is the Boolean algebra generated by
two complementary atoms $0***$ and $1***$. It suffices to check that
one of the two atoms has the same probability in both
distributions. Indeed, we have
$\P_{ZZ}(0***) = 2/3 = \P_{ZX}(0***)$.
The case of $\Dom{\P_{ZZ}} \cap \Dom{\P_{XZ}}$ is similar. The
complementary atoms are $**0*$ and $**1*$, and $\P_{ZZ}(**0*) = 2/3 =
\P_{XZ}(**0*)$.

$\Dom{\P_{ZX}} \cap \Dom{\P_{XX}}$ is the Boolean algebra generated by
two complementary atoms $***0$ and $***1$. We have $\P_{ZX}(***0) =
1/6 = \P_{XX}(***0)$. The case of $\Dom{\P_{XZ}} \cap \Dom{\P_{XX}}$
is similar. The complementary atoms are $*0**$ and $*1**$, and
$\P_{XZ}(*0**) = 1/6 = \P_{XX}(*0**)$.
The coherence requirement is satisfied.

This completes the definition of observation space \cS. By \eqref{zz}, the initial state is symmetric with respect to the two qubits, and therefore the sides of Alice and Bob play symmetric roles.
The transformation
\begin{equation}\label{aut}
 Z_AX_A Z_BX_B \longrightarrow  Z_BX_B Z_AX_A
\end{equation}
is an automorphism of \cS. We will take advantage of that symmetry later.

By construction, \cS\ models \E.  The maps $\mu_{ZZ}, \mu_{ZX}, \mu_{XZ}, \mu_{XX}$ are defined in the obvious way. For example, the events $\mu_{ZZ}(E_{ab}(Z\ox Z))$ are the atoms $a*b*$ of $\P_{ZZ}$ with probabilities given by \eqref{zz2}; here $a,b\in\{0,1\}$.

Finally, we address the grounding problem for the observation space in question. There are no nonnegative groundings.
Indeed suppose toward a contradiction that $\P$ is such a distribution.
Observe that, thanks to nonnegativity, if an event has probability
zero, then so does each point in it.
By \eqref{zz2}--\eqref{xz2}, $\P(\omega)=0$ for any sample point
$\omega$ in $1*1*$ or $0**0$ or $*00*$; call these points $\omega$
idle. The remaining points are
\[
0011,\ 0101,\ 0111,\ 1100,\ 1101.
\]
By \eqref{zx2}, $\P(1100) = 1/6$ because $1100$ is the only non-idle
point in set $1**0$. On the other hand,
by \eqref{xx2}, $\P(1100)=1/12$ because $1100$ is the only non-idle
point in $*1*0$. This gives the desired contradiction.

Now let's consider the general case where negative probabilities are
allowed. The grounding problem reduces to solving a system of linear
equations in 16 variables. Let $\P$ be an alleged grounding. To avoid fractions, we deal with variables
\[ v_0 = 12\P(0000),\ v_1 = 12\P(0001),\ \dots,\
   v_{14} = 12\P(1110),\ v_{15} = 12\P(1111). \]
Notice that each subscript of $v$ is the number represented in binary
by the corresponding argument of $\P$.
Each of the 16 atoms in \eqref{zz2}--\eqref{xx2} gives rise to a
linear equation in four variables:
\begin{align}\label{sys1}
0*0*&\ \textrm{ generates}\phantom{mmmmmmm}
  &&v_0 + v_1 + v_4 + v_5 &&= 4\nonumber\\
0*1*&\ \textrm{ generates}
  &&v_2 + v_3 + v_6 + v_7 &&= 4\nonumber\\
1*0*&\ \textrm{ generates}
  &&v_8 + v_9 + v_{12} + v_{13} &&= 4\nonumber\\
1*1*&\ \textrm{ generates}
  &&v_{10} + v_{11} + v_{14} + v_{15} &&= 0\nonumber\\
0**0&\ \textrm{ generates}
  &&v_{0} + v_{2} + v_{4} + v_{6} &&= 0\nonumber\\
0**1&\ \textrm{ generates}
  &&v_{1} + v_{3} + v_{5} + v_{7} &&= 8\nonumber\\
1**0&\ \textrm{ generates}
  &&v_{8} + v_{10} + v_{12} + v_{14} &&= 2\nonumber\\
1**1&\ \textrm{ generates}
  &&v_{9} + v_{11} + v_{13} + v_{15} &&=2\\
*00*&\ \textrm{ generates}
  &&v_{0} + v_{1} + v_{8} + v_{9} &&= 0\nonumber\\
*01*&\ \textrm{ generates}
  &&v_{2} + v_{3} + v_{10} + v_{11} &&= 2\nonumber\\
*10*&\ \textrm{ generates}
  &&v_{4} + v_{5} + v_{12} + v_{13} &&= 8\nonumber\\
*11*&\ \textrm{ generates}
  &&v_{6} + v_{7} + v_{14} + v_{15} &&= 2\nonumber\\
*0*0&\ \textrm{ generates}
  &&v_0 + v_2 + v_{8} + v_{10} &&= 1\nonumber\\
*0*1&\ \textrm{ generates}
  &&v_1 + v_3 + v_{9} + v_{11} &&= 1\nonumber\\
*1*0&\ \textrm{ generates}
  &&v_4 + v_6 + v_{12} + v_{14} &&= 1\nonumber\\
*1*1&\ \textrm{ generates}
  &&v_5 + v_7 + v_{13} + v_{15} &&= 9\nonumber
\end{align}

Any solution of this system of 16 equations with 16 variables
determines a grounding for our observation space. More
interesting and relevant groundings are those which conform
to the  automorphism \eqref{aut} which has four fixed points $0000,\
0101,\ 1010,\ 1111$ and pairs the remaining twelve points as follows:
\begin{align*}
0001 &\leftrightarrow 0100,\ 0010 \leftrightarrow 1000,\ 0011
       \leftrightarrow 1100,\\
0110 &\leftrightarrow 1001,\ 0111 \leftrightarrow 1101,\ 1011
       \leftrightarrow 1110.
\end{align*}
The conformant groundings give rise to \emph{symmetric
  solutions} of system \eqref{sys1}, subject to the following
constraints:
\begin{equation}\label{dual}
 v_1 = v_4,\ v_2 = v_8,\ v_3 = v_{12},\ v_6 = v_9,\ v_7 = v_{13},\  v_{11} = v_{14}.
\end{equation}
Notice that any solution $\vec v$ of \eqref{sys1} leads to a symmetric
solution. Indeed, let $\vec w$ be obtained from $\vec v$ by swapping
the values of the variables $v_i, v_j$ whenever the equality $v_i =
v_j$ occurs in \eqref{dual}. Due to the automorphism \eqref{aut}, if
$\vec v$ is a solution for system \eqref{sys1} then so is $\vec
w$. But the solutions for \eqref{sys1} form an affine space, and so
the average $(\vec v + \vec w)/2$ is a symmetric solution for
\eqref{sys1}.

Here we restrict attention to symmetric solutions of \eqref{sys1}. The constraints \eqref{dual} simplify the problem considerably. 0**0 and *00* generate the same equation in \eqref{sys1}, and so do pairs (0**1,*10*), (1**0,*01*), and (1**1,*11*). Accordingly, we have 12 equations with 10 variables
\[\phantom{mmn}  \begin{pmatrix} v_0,& v_1,& v_2,& v_3,& v_5,& v_6,& v_7,& v_{10},& v_{11},& v_{15}\end{pmatrix}. \]
We spare the reader the details of solving this system of linear
equations. The general solution can be given in the following form
\begin{align*}
\begin{matrix}
\qquad (-3, & -1, & 2, & 0, & 9, & 2, & 0, & 0, & 0, & 0) \\
+\ a\cdot (\phantom{-}2,& -1,& -1,& 1,& 0,& 0,& 0,& 0,& 0,& 0) \\
+\ b\cdot (\phantom{-}0,& 1,& 0,& 0,& -2,& -1,& 1,& 0,& 0,& 0) \\
+\ c\cdot (\phantom{-}0,& 0,& 1,& 0,& 0,& -1,& 0,& -2,& 1,& 0) \\
+\ d\cdot (-1,& 1,& 1,& 0,& -1,& -1,& 0,& -1,& 0,& 1)
\end{matrix}
\end{align*}
where $a,b,c,d$ are real parameters.

\section{Contextuality and nonnegative groundings}
\label{sec:context}
%

For simplicity, as in \S\ref{sec:obs} and \S\ref{sec:finite}, we consider finite-dimensional Hilbert spaces, so that the spectra of observables are pure point spectra.
Recall that measuring several commuting observables produces eigenvalues for a common eigenvector of those observables, in short simultaneous eigenvalues.

In a hidden-variable theory, measurement outcomes are determined by the values of the hidden variables and thus exist prior to the measurement.
They are merely revealed by the measurement. It does not matter whether you measure an observable $A$ all by itself (context 1) or together with another compatible observable $B$ (context 2). In that sense, hidden-variable theories are non-contextual.
By the Kochen-Specker theorem \cite{KS}, non-contextual theories cannot reproduce all predictions of quantum mechanics in Hilbert spaces of dimension $\ge3$.

Notice that an observation space \cS\ that models a multi-test experiment $\E = \big( \ket\psi, \O \big)$  constitutes a hidden-variable model of \E\ with one hidden variable whose possible values are the sample points.
The partition requirement in the definition of modeling ensures that each sample point $\omega$ assigns a unique eigenvalue to any observable $A\in\O$:
if $\omega$ is in $\mu_A E_r(A)$ then it assigns to $A$ the  value $r$.
Accordingly, $\mu_A E_r(A)$ is the event that measuring $A$ in the state \ket\psi\ produces eigenvalue $r$.

The correctness requirement assures that the probabilities $\P_A(\mu_A E_r(A))$ match the predictions of quantum theory.

\begin{definition}\label{def:context}
We say that a set \O\ of observables on \H\ is \emph{contextual} if, for every function $f$ assigning to each $A\in\O$ one of its eigenvalues, there are commuting observables $A_1,\dots,A_k\in \O$ such that the values $f(A_1),\dots, f(A_k)$ are not simultaneous eigenvalues of $A_1,\dots,A_k$. \qef
\end{definition}

This definition is stricter than some in the literature, but the standard examples of contextuality provide sets \O\ that satisfy our definition of contextuality.
Kochen and Specker proved that, in the three-dimensional Hilbert space, there is a finite contextual set of rank-1 projections \cite{KS}.
A simpler example is described in \cite{Cabello} and nicely illustrated in the Wikipedia article \cite{W-KS}.
Other simple contextuality examples are described in \cite{Mermin} and \cite{PeresA}.

For instance, let us indicate how the example in \cite{Cabello} satisfies our definition. That example involves 18 vectors in $\C^4$, forming 9 bases of four vectors each, with each of the 18 vectors in exactly two of the bases.
Let \O\ consist of the 18 projections to the 18 vectors plus an observable for each base, having the four vectors of the base as eigenvectors for four distinct eigenvalues.
Let $f$ be any function as in Definition~\ref{def:context}. For each of the 9 bases, $f$ picks out one eigenvalue for the associated observable and thus picks out one of the basis vectors. Since 9 is odd, one of the 18 vectors, call it $v$, is chosen from exactly one of the two bases where it occurs. Let $A$ be the projection to $v$.

If $f(A) = 0$, let $B$ be the observable corresponding to the base where $f$ chose $v$. Then $f(A), f(B)$ are not simultaneous eigenvalues of $A,B$. Otherwise, $f(A) = 1$. In this case, let $B$ be the observable corresponding to the other base containing $v$, i.e.\  where $f$ could choose $v$ but didn't. Again, $f(A), f(B)$ are not simultaneous eigenvalues of $A,B$. In either case, we get a witness for the contextuality of \O.

Concerning the Peres-Mermin square \cite{Mermin,PeresA}, the usual contextuality argument shows that the square also satisfies our definition of contextuality with $k=3$.

Recall the modeling Definition~\ref{def:model}, the notation
introduced just before that definition, and the monotonicity
Definition~\ref{def:monotone}.

\begin{theorem}\label{thm:context}
Let \O\ be a contextual set of observables.
Then no multi-test experiment $\E = \left(\ket\psi, \O\right)$ can be monotonically modeled by an observation space.
\end{theorem}

\begin{proof}
Suppose, toward a contradiction, that $\cS = \left(\Omega, \ang{\P_A: A\in\O}\right)$ is an observation space that monotonically models \E\ via  maps $\mu_A: \ES(A)
\to 2^\Omega$.

Fix any sample point $\omega\in\Omega$. For each $A\in\O$ and each
eigenvalue $r$ of $A$, let $f(A) = r$ if $\omega \in
\mu_A(E_r(A))$. By the partition requirement, $f$ is well
defined.

Since \O\ is contextual, there are commuting $A_1,\dots,A_k$ in \O\
such that $f(A_1),\dots, f(A_k)$ are not simultaneous eigenvalues of $A_1,\dots,A_k$.
Lemma~\ref{lem:sm} provides an observable $B\in\O$ whose eigenspaces are intersections of eigenspaces of $A_1,\dots,A_k$.
In particular, letting $b=f(B)$, we have that $E_b(B)=E_{a_1}(A_1)\cap \cdots \cap E_{a_k}(A_k)$ for some eigenvalues $a_1, \dots, a_k$ of $A_1,\dots,A_k$.

Since $B$ commutes with $A_1,\dots,A_k$, monotonicity implies that $\omega$, being in $\mu_B(E_b(B))$, is
also in all $\mu_{A_i}(E_{a_i}(A_i))$.  That is, $f(A_i)=a_i$
for all $i$. But $(a_1,\dots, a_k)$ are simultaneous eigenvalues of $A_1,\dots,A_k$, witnessed by any eigenvector in $E_b(B)$, and this contradicts our choice of $A_1,\dots,A_k$.
\end{proof}

\begin{remark}
In the preceding proof, if observation space \cS\ admits a non-negative grounding \P, the sample point $\omega$ might have $\P(\omega)=0$.
In such a situation, the partition requirement of Definition~\ref{def:model} seems too strong, at least in the case of discrete distributions.
Why should an impossible outcome $\omega$ produce well defined values for observables?
Fortunately, in that case (of discrete distributions), we can modify the proof and achieve that the relevant sample point $\omega$ has $\P(\omega)>0$.

By definition, the domain of \P\ is generated by observable events.
Without loss of generality, we may assume the following.
\begin{enumerate}
\item Every two sample points are distinguished by some observable event.
\item There are no sample points $\omega$ with probability $\P(\omega) = 0$.
\end{enumerate}
To achieve (1), for every maximal subset $S$ of $\Omega$ of
cardinality $\ge2$ such that no two points of $S$ are distinguished by
the observable events, merge the points of $S$ into one point of
probability $\sum_{\omega\in
  S}\P(\omega)$. Then, to achieve (2), just discard sample points
$\omega$ of probability $\P(\omega) = 0$. It is obvious that the
reduced observation space still admits a nonnegative grounding and
monotonically models \E. As a result, each sample point $\omega$ has a positive probability $\P(\omega)$. \qef

\end{remark}

\section{Wigner's distribution as a ground distribution}
\label{sec:w}

This section can be read independently from all the previous sections. But, even though we don't mention observation spaces explicitly, the reader will recognize that we deal with groundings for a particular observation space.
Contrary to \S\S3--5, where we restricted attention to observables with pure point spectra, here we consider observables of a very different kind.

Heisenberg's uncertainty principle asserts a limit to the precision
with which position $x$ and momentum $p$ of a particle can be known
simultaneously. One can know the probability distributions of $x$ and
$p$ individually, yet the joint probability distribution with the
correct marginal distributions of $x$ and $p$ makes no physical
sense.
But maybe such a joint distribution makes mathematical sense.

In 1932, Eugene Wigner exhibited a joint signed distribution
\cite{Wigner} with the correct marginal distributions of $x$ and
$p$. Some of the values of Wigner's distribution are negative. ``But of course this must not hinder the use of it in calculations as an auxiliary function which obeys many relations we would expect from such a probability'' \cite[p.~751]{Wigner}.
Wigner's ideas indeed have been used in optical tomography; see
\cite{Smithey} for example.

According to \cite{Bertrand}, Wigner's distribution is the unique signed probability distribution that yields the correct marginal distributions for position and momentum and all their linear combinations. Here we rigorously prove that assertion%
\footnote{When we found a simple proof of the assertion, we attempted, in \cite{G224}, to popularly explain that proof. Here we replace all hand-waving in our explanation with rigorous mathematical arguments. Fortunately, the proof remains relatively simple.}.
For simplicity we work with one particle moving in one dimension, but everything we do in this section generalizes in a routine way to more (distinguishable non-relativistic) particles in more dimensions.

In this context we have to deal with the Hilbert space $L^2(\R)$ of
square integrable functions $f: \R \to\C$ where the inner product
\braket fg is given by the Lebesgue integral
\[ \int_{-\infty}^\infty f^*(x)g(x) dx. \]
  In the rest of this section, by default, the
integrals are from $-\infty$ to $\infty$.

A state of the particle is given by a unit vector $\ket\psi$  in $L^2(\R)$.
The position and momentum are given by Hermitian operators $X$ and $P$ where
\[
 (X\psi)(x)= x\cdot\psi(x)\quad\text{and}\quad
 (P\psi)(x)= -i\hbar\frac{d\psi}{dx}(x)
\]
and $\hbar$ is the (reduced) Planck constant.
For any real numbers $a,b$, not both zero, consider the Hermitian
operator (an observable) $Z=aX+bP$.

The unbounded operator $Z$ is not defined on the whole state space
$L^2(\R)$. But its domain clearly includes the space $C^\infty_c(\R)$
of smooth, compactly supported functions on $R$. For brevity, we will
say that a state \ket{\psi} in $L^2(\R)$ is \emph{nice} if the
function $\ket{\psi} \in C^\infty_c(\R)$. The restriction of the
operator $Z$ to $C^\infty_c(\R)$ is essentially self-adjoint
\cite[Proposition~9.40]{Hall} which is implicitly used below.

Fix a nice state \ket{\psi} and consider an observation space
\begin{equation}\label{os}
\left(\R\times\R, \ang{\P_{a,b}:\ a,b\in\R
\textrm{ and }a^2+b^2\ne0} \right)
\end{equation}
where the domain of any $\P_{a,b}$ consists of sets
$ [ax+bp\in E] = \big\{(x,p): ax+bp\in E\big\}, $
wherein $E$ ranges over Borel subsets of $\R$, and $\P_{a,b}[ax+bp\in
E]$ is the quantum mechanical probability $\P_\psi[aX+bP\in E]$ that
the  measurement of $aX+bP$ in \ket{\psi} produces a value in $E$.

To check the consistency requirement, let $e = [ax+bp \in E] = [a'x +
b'p\in E'] \ne\emptyset$.
Pick any point $(x_0,p_0)\in e$. Since $e = [ax+bp \in E]$, the value
$c = ax_0 + bp_0$ belongs to $E$, and $e$ includes the line $L$ given
by equation $ax + bp = c$.
Pick any point $(x',p')\in L$. Since $L\subseteq e = [a'x + b'p\in
E']$, the value $c' = a'x' + b'p'$ belongs to $E'$, and $e$ includes
the line $L'$ given by equation $a'x + b'p = c'$
If $L' \ne L$, then $e$ includes all lines parallel to $L'$, because
$(x',p')$ could be any point on $L$. Thus $e = \R^2$ and $1 =
\P_{a,b}(e) = \P_{a',b'}(e)$.
If $L' = L$, then $a' = ar,\ b' = br,\ c' = cr$ for some $r$, and $E'
= rE$, and $\P_{a,b}(e) = \P_\psi[aX+bP\in E] = \P_\psi[a'X+b'P\in E']
= \P_{a',b'}(e)$.

We restricted the range of the variable $E$ to Borel sets because it
is sufficient for our purposes and because the functional calculus
guarantees that, for Borel sets $E$, the probabilities
$\P_\psi[aX+bP\in E]$ are well defined \cite[\S6]{Hall}.

\begin{lemma}[\cite{Volberg}]\label{lem:sasha}
For any real $a,b$, not both zero, there is a probability density
function $g(z)$ for $Z = aX + bP$, so that for every Borel set $E$
\[ \P_\psi[Z\in E] = \int_E g(z) dz. \]
\end{lemma}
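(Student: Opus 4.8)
The plan is to show that $Z = aX + bP$, as an essentially self-adjoint operator on the nice state $\ket\psi$, has a spectral measure whose pushforward by $\psi$ is absolutely continuous with respect to Lebesgue measure, and then to identify the density.

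First I would reduce to a single canonical operator. Since $(a,b)\ne(0,0)$, after rescaling (using $\P_{a,b}(e) = \P_{ra,rb}(e)$ observed above, or rather the elementary fact $\sigma(cX+dP)$ transforms predictably) I may assume $a^2+b^2=1$. Then $Z = aX+bP$ is obtained from $X$ by a rotation in the $(X,P)$ plane, i.e.\ by conjugation with a metaplectic/fractional-Fourier-transform unitary $U$ on $L^2(\R)$: there is a unitary $U$ with $U^\dag X U = aX+bP$ (when $b\ne0$ this is a genuine fractional Fourier transform; when $b=0$ it is trivial and $Z=\pm X$). Consequently the distribution of $Z$ in state $\ket\psi$ equals the distribution of $X$ in state $U\ket\psi$, namely $\P_\psi[Z\in E] = \int_E |(U\psi)(z)|^2\,dz$, so one takes $g(z) = |(U\psi)(z)|^2$, which is a bona fide probability density. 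The one case needing separate attention is $b\ne 0$ versus $b=0$: for $b=0$, $Z=aX$ and $g(z) = \frac1{|a|}|\psi(z/a)|^2$ directly.

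The technical core is to justify the unitary intertwining rigorously on the nice state. I would invoke the essential self-adjointness of $Z\restriction C^\infty_c(\R)$ (cited from Hall, Proposition 9.40) so that $Z$ has a well-defined self-adjoint closure and spectral measure $E\mapsto \mathbf 1_E(Z)$, and then either (i) construct $U$ explicitly as $e^{-i\theta N}$ where $N$ is the harmonic-oscillator number operator and $\theta$ is the rotation angle determined by $(a,b)$ — using the standard fact that $e^{i\theta N} X e^{-i\theta N} = X\cos\theta + P\sin\theta$ on a common core such as the Schwartz functions or Hermite functions — or (ii) avoid $U$ altogether and argue directly: the smooth compactly supported $\psi$ lies in the domain of every power of $Z$, and $t\mapsto \langle\psi|e^{itZ}|\psi\rangle$ is the Fourier transform of the spectral distribution; one shows this characteristic function is the Fourier transform of an $L^1$ density by exhibiting enough decay, or simply by recognizing it as $\langle U\psi|e^{itX}|U\psi\rangle$ and reading off $g = |U\psi|^2 \in L^1$.

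The main obstacle I anticipate is domain bookkeeping: the fractional Fourier transform $U$ must be shown to carry the nice state into a function for which $|U\psi|^2$ is the correct density, which requires knowing that $U$ maps (at least) Schwartz space to Schwartz space and that $U^\dag Z U$ agrees with $X$ not merely formally but as self-adjoint operators; the delicate point is matching the chosen self-adjoint extension (supplied by essential self-adjointness on $C^\infty_c$) with the one conjugate to $X$, rather than some other extension. Since $C^\infty_c(\R)\subseteq\mathcal S(\R)$ and $Z$ is essentially self-adjoint already on $C^\infty_c(\R)$, its unique self-adjoint extension must coincide with the essentially-self-adjoint extension of $Z$ on $\mathcal S(\R)$, which is $U X U^\dag$; this uniqueness is what closes the gap. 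Everything else — positivity of $g$, total mass $1$, the integral formula for $\P_\psi[Z\in E]$ — then follows from the spectral theorem applied to $X$ in the state $U\ket\psi$.
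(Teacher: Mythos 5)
Your proof is correct in outline, but it takes a genuinely different route from the paper's. Both arguments are unitary-conjugation arguments that reduce $Z=aX+bP$ to a canonical operator whose density is known, but the unitaries are different. You conjugate $Z$ to $X$ by a \emph{rotation} in phase space, i.e.\ a fractional Fourier transform $U=e^{-i\theta N}$ built from the harmonic-oscillator number operator; the paper instead (WLOG taking $b=1$, $c=-a/\hbar$) conjugates $P$ to $Z$ by the \emph{shear} $U:f\mapsto e^{icx^2/2}f$, a plain multiplication by a unimodular chirp. The paper's choice buys a lot of simplicity: multiplication operators are manifestly unitary, the conjugation identity $UPU^{-1}=Z$ is a one-line Leibniz computation with no need for Baker--Campbell--Hausdorff or Mehler-kernel machinery, there is no dimensional mismatch between $X$ and $P$ to nondimensionalize away before one can speak of a ``rotation,'' and — crucially for the ``nice state'' bookkeeping — $U^{-1}$ preserves $C^\infty_c(\R)$ exactly, so $U^{-1}\ket\psi$ is again nice and one simply quotes the well-known density of $P$ in a nice state. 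Your rotation, by contrast, destroys compact support (as any genuine Fourier-type transform must), so you are forced, as you correctly note, to retreat to Schwartz space and re-argue essential self-adjointness there; that gap is closable, but it is extra work the chirp avoids. Your version does have the mild aesthetic advantage of treating all directions $(a,b)$ uniformly by one angle $\theta$, where the paper splits into the trivial cases $Z=X$, $Z=P$ and the generic case $b\ne0$. Both proofs are valid; the paper's is the more economical.
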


\begin{proof}
If $Z$ is $X$ or $P$, the claim is well known. Suppose $b\ne0$.
To simplify notation, assume without loss of generality that $b = 1$,
and let $c = -a/\hbar$ so that $Z = -c\hbar X + P$.
Let $U$ be the unitary operator $f\mapsto e^{icx^2/2}f$ over
$L^2(\R)$.
For any differentiable function $f(x)\in L^2(\R)$, we have
\[ (UPU^{-1})f = -i\hbar \cdot e^{icx^2/2}\frac d{dx} \left(
    e^{-icx^2/2} f \right)  = -c\hbar xf - i\hbar f' = -c\hbar Xf + Pf
  = Zf. \]
Since $Z$ is the unitary conjugate of $P$, there is a density function
for $Z$ in state \ket{\psi} because there is a density function for
$P$ in the nice state $U^{-1}\ket\psi$.
\end{proof}


Wigner's distribution is the signed probability distribution on $\R^2$
with the density function
\begin{equation} \label{wd}
w(x,p) = \frac1{2\pi} \int \psi^*(x+\frac{\beta\hbar}2)
\psi(x-\frac{\beta\hbar}2) e^{i\beta p}\,d\beta.
\end{equation}
All values of Wigner's density functions are real. Indeed, if
$F(x,p,\beta)$ is the integrand in \eqref{wd}, then
\[ F^*(x,p,\beta) =
  \psi(x+\frac{\beta\hbar}2)\psi^*(x-\frac{\beta\hbar}2)
e^{-i\beta p} = F(x,p,-\beta),\]
and therefore
\[ w^*(x,p) = \frac1{2\pi}\int_{-\infty}^\infty F^*(x,p,\beta)d\beta
 = \frac1{2\pi}\int_{-\infty}^\infty F(x,p,-\beta)d\beta = w(x,p).\]
On the other hand, the values of $w(x,p)$ may be negative.

\begin{theorem}\label{thm:w}%
\mbox{}\hspace{-5pt}\footnotemark\
In every nice state \ket{\psi}, Wigner's density function
is the unique signed density function on $\R^2$ that yields the
correct marginal density functions for all linear combinations
$Z=aX+bP$ where $a,b$ are real numbers, not both zero.
\end{theorem}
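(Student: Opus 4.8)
The plan is to recognize that the statement is essentially a uniqueness theorem for an integral (Radon-type) transform, so I would translate everything into the language of Fourier transforms and then argue that knowing all the marginals of a signed density $\rho(x,p)$ along every line direction determines its two-dimensional Fourier transform, hence determines $\rho$ itself. First I would fix the nice state $\ket\psi$ and, using Lemma~\ref{lem:sasha}, record that for each pair $(a,b)\ne(0,0)$ the random variable $Z=aX+bP$ has a genuine density $g_{a,b}$; the hypothesis on a candidate signed density $\rho$ on $\R^2$ is exactly that for every such $(a,b)$ the pushforward of $\rho$ under $(x,p)\mapsto ax+bp$ equals $g_{a,b}$. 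Equivalently, for all real $s$,
\[
\int_{\R^2} e^{is(ax+bp)}\,\rho(x,p)\,dx\,dp \;=\; \int_{\R} e^{isz}\,g_{a,b}(z)\,dz \;=\; \ang{e^{is(aX+bP)}},
\]
where the right-hand side is a quantity determined by $\ket\psi$ alone. Writing $(\xi,\eta)=(sa,sb)$, the left side is just $\widehat{\rho}(\xi,\eta)$ (the 2D Fourier transform of $\rho$), and as $(a,b)$ ranges over all nonzero directions and $s$ over all reals, the pair $(\xi,\eta)$ ranges over all of $\R^2$. Hence the marginal hypothesis pins down $\widehat{\rho}$ on all of $\R^2$, and by Fourier inversion $\rho$ is uniquely determined; so it suffices to verify that Wigner's $w$ is \emph{a} solution.

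The concrete computation I would then carry out is: compute $\widehat{w}(\xi,\eta)=\int e^{i(\xi x+\eta p)} w(x,p)\,dx\,dp$ from the defining formula \eqref{wd}, and separately compute the quantum characteristic function $\ang{e^{i(\xi X+\eta P)}}$, and check they agree. For the first: substitute \eqref{wd}, do the $p$-integral to get a factor $2\pi\,\delta(\eta+\beta)$, which collapses the $\beta$-integral to $\beta=-\eta$, leaving $\int e^{i\xi x}\,\psi^*(x-\tfrac{\eta\hbar}{2})\psi(x+\tfrac{\eta\hbar}{2})\,dx$. For the second: write the Weyl operator $e^{i(\xi X+\eta P)}$ via the Baker--Campbell--Hausdorff / Weyl relation $e^{i(\xi X+\eta P)}=e^{-i\xi\eta\hbar/2}e^{i\xi X}e^{i\eta P}$, note that $e^{i\eta P}$ is the translation $(\psi)(x)\mapsto \psi(x+\eta\hbar)$ (since $P=-i\hbar\,d/dx$), and compute $\ang{\psi|e^{i(\xi X+\eta P)}|\psi}=\int \psi^*(x)e^{-i\xi\eta\hbar/2}e^{i\xi x}\psi(x+\eta\hbar)\,dx$; the change of variable $x\mapsto x-\tfrac{\eta\hbar}{2}$ turns this into exactly the expression obtained for $\widehat w$. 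Thus $w$ reproduces all marginals, and by the uniqueness half it is the only signed density that does.

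The main obstacle is not the algebra but making the Fourier/inversion argument \emph{rigorous} for \emph{signed} densities. A signed density need not be in $L^1$ a priori, so ``$\rho$ is determined by $\widehat\rho$'' needs care: I would handle this by noting that we are allowed (by the paper's Convention, and by the fact that we only care about the distribution, i.e.\ the set function) to phrase uniqueness at the level of the signed measure and its Fourier--Stieltjes transform, or — cleaner — to observe that \emph{differences} of two solutions have all line-marginals equal to $0$, hence Fourier transform identically $0$ as a tempered distribution, hence are $0$; niceness of $\ket\psi$ (smooth, compactly supported) guarantees $w$ and the characteristic function are Schwartz-class in the relevant variables, so the tempered-distribution inversion is legitimate. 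A secondary technical point is justifying the interchange of integrals (Fubini) and the appearance of the Dirac $\delta$; I would either do this with a standard approximation/test-function argument or restrict to Schwartz test functions throughout, which is harmless since $\psi\in C^\infty_c$. I expect these regularity issues, rather than any conceptual difficulty, to be where the real work lies.
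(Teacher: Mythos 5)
Your proposal is correct and takes essentially the same route as the paper: both proofs hinge on the projection-slice identity relating the 1D Fourier transform of each marginal $g_{a,b}$ to the slice $\widehat\rho(sa,sb)$ of the 2D Fourier transform (the paper's Lemma~\ref{lem:j2m} and Corollary~\ref{cor:j2m}), both compute the characteristic function $\bra\psi e^{-i(\alpha X+\beta P)}\ket\psi$ by splitting the exponential via the Weyl/Zassenhaus commutator relation (the paper's Lemma~\ref{lem:key}), and both conclude by Fourier inversion. The only differences are presentational — you state uniqueness first (the marginals pin down $\widehat\rho$ everywhere) and then verify $w$ is a solution by a separate direct computation, whereas the paper derives $w$ from the constraints and observes the derivation is reversible — and you are somewhat more explicit about the regularity issues (tempered distributions, $L^1$) that the paper leaves largely implicit.
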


\footnotetext{A stronger version of the theorem (not requiring the state to be nice) with a simpler proof is in the article ``Wigner's quasidistribution and Dirac's kets'' at \url{https://arxiv.org/abs/2201.05911}}

\begin{corollary}
For every nice state \ket{\psi}, Wigner's distribution is the unique continuous grounding on the Borel sets in $\Omega$
for the observation space \eqref{os}.
\end{corollary}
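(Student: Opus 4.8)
The plan is to derive the corollary from Theorem~\ref{thm:w}; the only real work is a bookkeeping translation between density functions and signed Borel measures. First I would fix what ``continuous grounding on the Borel sets'' means: a grounding $\P$ whose domain is the Borel $\sigma$-algebra of $\Omega=\R\times\R$ (which is legitimate, since the observable events $[ax+bp\in E]$ already generate that $\sigma$-algebra, so here the paper's convention on identifying groundings is vacuous and uniqueness is literal) and which is represented by a continuous density, $\P(A)=\int_A w(x,p)\,dx\,dp$; for $\P$ to be a well-defined countably additive set function of total mass $1$, such a $w$ is necessarily integrable. Then I would unwind the grounding condition. To say that $\P$ restricts to $\P_{a,b}$ is to say that, for every $(a,b)\ne(0,0)$ and every Borel $E\subseteq\R$,
\begin{align*}
\int_{[ax+bp\in E]} w(x,p)\,dx\,dp
 &= \P_{a,b}[ax+bp\in E] \\
 &= \P_\psi[aX+bP\in E] \ =\ \int_E g_{a,b}(z)\,dz ,
\end{align*}
where the last equality is Lemma~\ref{lem:sasha} and $g_{a,b}$ is the quantum density of $Z=aX+bP$. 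By Fubini and the linear change of variables $(x,p)\mapsto(x,z)$ with $z=ax+bp$ (Jacobian $1/|b|$ when $b\ne0$, symmetrically when $a\ne0$), the left side equals $\int_E m_{a,b}(z)\,dz$, where $m_{a,b}\in L^1(\R)$ is the marginal density of $w$ along $(a,b)$. As $E$ is arbitrary, this is exactly $m_{a,b}=g_{a,b}$ almost everywhere; so the grounding condition for $\P$ is equivalent to: $w$ yields the correct marginal density function for every $Z=aX+bP$.

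Now Theorem~\ref{thm:w} settles the corollary. For uniqueness: the only signed density on $\R^2$ with that property is Wigner's density $w$ of \eqref{wd}, hence $\P$ must be Wigner's distribution. For existence, I would check that Wigner's distribution is itself a continuous grounding. When $\ket\psi\in C^\infty_c(\R)$ is supported in $[-R,R]$, the inner function $\beta\mapsto\psi^*(x+\beta\hbar/2)\psi(x-\beta\hbar/2)$ in \eqref{wd} is smooth and compactly supported, so $w$ is smooth in $(x,p)$, rapidly decaying in $p$, and supported in $|x|\le R$; in particular $w$ is continuous, lies in $L^1(\R^2)$, and has $\int\!\!\int w=1$ because integrating out $p$ yields $|\psi(x)|^2$ (Fourier inversion in $\beta$ at $\beta=0$). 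Thus $A\mapsto\int_A w$ is a genuine signed probability distribution on the Borel sets; by Theorem~\ref{thm:w} its marginal along every $(a,b)$ is $g_{a,b}$, which by the equivalence just established means precisely that it restricts to each $\P_{a,b}$. So Wigner's distribution is a continuous grounding, and it is the only one.

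Hence the corollary carries no difficulty beyond Theorem~\ref{thm:w}, where the substance lies, and I would not expect the main obstacle to be in the corollary at all. Were I to attack the theorem, I would compute the two-dimensional Fourier transform $\widehat w(\sigma,\tau)=\int\!\!\int w(x,p)e^{-i(\sigma x+\tau p)}\,dx\,dp$: the partial transform in $p$ recovers $\psi^*(x+\tau\hbar/2)\psi(x-\tau\hbar/2)$ by Fourier inversion in $\beta$, and the remaining $x$-integral, on writing $\psi(x\mp\tau\hbar/2)=(e^{\mp i\tau P/2}\psi)(x)$ and $e^{-i\sigma x}$ as the multiplication operator $e^{-i\sigma X}$, equals $\langle\psi|e^{-i\tau P/2}e^{-i\sigma X}e^{-i\tau P/2}|\psi\rangle$. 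A Weyl/Baker--Campbell--Hausdorff argument (legitimized by the essential self-adjointness of $aX+bP$ noted before the statement) identifies this with $\langle\psi|e^{-i(\sigma X+\tau P)}|\psi\rangle$, which at $(\sigma,\tau)=(ta,tb)$ is the quantum characteristic function $\langle\psi|e^{-itZ}|\psi\rangle=\int e^{-itz}g_{a,b}(z)\,dz$ by the spectral theorem and Lemma~\ref{lem:sasha}. Since the marginal $m^v_{a,b}$ of any $L^1$ density $v$ satisfies $\widehat{m^v_{a,b}}(t)=\widehat v(ta,tb)$, this shows $w$ itself has the correct marginals, and any $L^1$ density $v$ with the correct marginals has $\widehat v=\widehat w$ on every line through the origin --- hence on all of $\R^2$, the origin being pinned by total mass $1$ --- so $v=w$ by injectivity of the $L^1$ Fourier transform. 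The one delicate point is the operator identity for $e^{-i(\sigma X+\tau P)}$; it can be bypassed by conjugating through the unitary $U\colon f\mapsto e^{icx^2/2}f$ from the proof of Lemma~\ref{lem:sasha}, which satisfies $UPU^{-1}=Z$ and turns $\langle\psi|e^{-itZ}|\psi\rangle$ into the corresponding integral directly.
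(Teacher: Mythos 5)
Your proposal is correct and follows essentially the same route as the paper: the corollary is read off from Theorem~\ref{thm:w} by translating between signed Borel measures with continuous density and signed density functions, noting that the observable events $[ax+bp\in E]$ generate the Borel $\sigma$-algebra (so the identification convention is vacuous), and verifying that Wigner's $w$ is indeed continuous, integrable, of total mass $1$, and has the correct marginals --- which the paper also checks via the reversibility of the derivation of \eqref{w3} from \eqref{w1}. Your sketch of the theorem itself is the same Fourier-transform argument (Weyl/Zassenhaus identity for $e^{-i(\sigma X+\tau P)}$, marginal condition as $\hat f(a\zeta,b\zeta)=\hat g(\zeta)/\sqrt{2\pi}$, inversion), merely run in the direction $w\to\widehat w$ rather than $\hat f\to f$.
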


In the rest of this section, we prove Theorem~\ref{thm:w}. Throughout
the proof, we use $(a,b)$ to denote pairs of real numbers not both
zero.

If $f(x,p)$ is a signed probability density function on $\R^2$, and $z
= ax+bp$, then the marginal density function $g(z)$ can be defined
thus:
\begin{equation}\label{marginal}
g(z) =
\begin{cases}
 \displaystyle
 \frac1{|b|} \int f(x, \frac1b (z-ax))\,dx
 &\mbox{if $b\ne0$}\\[10pt]
 \displaystyle
 \frac1{|a|} \int f(\frac1a (z-bp), p)\,dp
 &\mbox{if $a\ne0$}
\end{cases}
\end{equation}
Here's a justification in the case $b\ne0$. For any real $u\le v$, the
probability that $u\le z\le v$ should be
\[
\int_u^v g(z) dz= \iint_{u\le ax+bp\le v} f(x,p)\,dx\,dp.
\]
Since $p  = \frac1b(z-ax)$, we can change variables in the integral,
from $x,p$ to $x,z$. The absolute value of the Jacobian determinant of
this transformation is $1/|b|$, so we obtain
\begin{align*}
\int_u^v g(z) dz
&= \iint_{u\le ax+bp\le v}
   \frac1{|b|} f\big(x,\frac1b(z-ax)\big)\,dx\,dz \\
&= \int_u^v dz \int_{-\infty}^\infty
   \frac1{|b|} f\big(x,\frac1b(z-ax)\big)\,dx.
\end{align*}
Since the first and last expressions coincide for all $u\le v$, we
have
\[
  g(z) = \frac1{|b|} \int f\big(x,\frac1b(z-ax)\big)\,dx.
\]

\begin{lemma}\label{lem:j2m}
For any $a,b$ not both zero and $f$ as above, the following claims are equivalent.
\begin{enumerate}
\item $g(z)$ is the marginal probability density function of $z = ax + bp$.
\item $\displaystyle \hat g(\zeta) =
  \sqrt{2\pi}\cdot\hat f(a\zeta,b\zeta)$ where $\hat f$ and $\hat g$
  are (forward) Fourier transforms of $f$ and $g$ respectively.
\end{enumerate}
\end{lemma}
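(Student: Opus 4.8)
The plan is to recognize Lemma~\ref{lem:j2m} as the classical Fourier projection--slice identity and to reduce it to the single change-of-variables computation already carried out in the justification of \eqref{marginal}. First fix normalizations: with the symmetric convention the forward transform in $n$ variables is $\hat h(\xi)=(2\pi)^{-n/2}\int h(x)e^{-ix\cdot\xi}\,dx$, so that $\hat f(\xi,\eta)=\frac1{2\pi}\iint f(x,p)e^{-i(\xi x+\eta p)}\,dx\,dp$ while $\hat g(\zeta)=\frac1{\sqrt{2\pi}}\int g(z)e^{-i\zeta z}\,dz$; the factor $\sqrt{2\pi}$ in claim~(2) is exactly the discrepancy between these two normalizations.

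The heart of the matter is the identity, valid whenever $g$ is the marginal density of $z=ax+bp$,
\[ \int g(z)\,e^{-i\zeta z}\,dz \;=\; \iint f(x,p)\,e^{-i\zeta(ax+bp)}\,dx\,dp \qquad(\zeta\in\R). \]
When $b\ne0$ this is obtained by repeating verbatim the substitution $(x,p)\mapsto(x,z)$, $z=ax+bp$, $dp=\frac1b\,dz$ used before the statement of the lemma, now with the oscillatory weight $e^{-i\zeta z}$ in place of the indicator $\mathbf 1_{[u,v]}(z)$; when $a\ne0$ one uses the symmetric substitution from the second line of \eqref{marginal}. The right-hand side is $2\pi\,\hat f(a\zeta,b\zeta)$ by definition, so dividing through by $\sqrt{2\pi}$ yields $\hat g(\zeta)=\sqrt{2\pi}\,\hat f(a\zeta,b\zeta)$. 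This proves $(1)\Rightarrow(2)$.

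For $(2)\Rightarrow(1)$ I would invoke injectivity of the Fourier transform. Let $g_0$ denote the genuine marginal density of $z=ax+bp$, i.e.\ the function given by \eqref{marginal}. By the implication just established, $\hat g_0(\zeta)=\sqrt{2\pi}\,\hat f(a\zeta,b\zeta)$ for every $\zeta$, and by hypothesis this equals $\hat g(\zeta)$. Since an $L^1$ function (more generally a finite signed measure) is determined by its Fourier transform, $g=g_0$ almost everywhere, so $g$ is the marginal density of $z$.

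The only delicate point is making these manipulations legitimate, and I expect it to be the main obstacle. I would assume what the surrounding discussion tacitly uses --- that $f\in L^1(\R^2)$ --- so that the iterated integrals converge, Fubini--Tonelli licenses both the interchange of order and the change of variables, $g_0\in L^1(\R)$, and all the transforms above are bona fide continuous functions to which the inversion/uniqueness theorem applies. If one prefers to allow $f$ to be merely a signed density of finite total variation, the argument goes through unchanged with the Fourier--Stieltjes transform and its uniqueness theorem. Apart from this bookkeeping --- and the (routine) justification of the change of variables for the non-integrable weight $e^{-i\zeta z}$, which is handled by approximating it uniformly on compacta by compactly supported test functions and using dominated convergence against $f\in L^1$ --- every step is a one-line substitution.
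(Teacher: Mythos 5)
Your proof follows exactly the same route as the paper's: the direction $(1)\Rightarrow(2)$ via the change of variables $(x,p)\mapsto(x,z)$ with $z=ax+bp$ that turns $\hat g(\zeta)$ into $\iint f(x,p)e^{-i\zeta(ax+bp)}\,dx\,dp = 2\pi\hat f(a\zeta,b\zeta)$, and the direction $(2)\Rightarrow(1)$ by applying $(1)\Rightarrow(2)$ to the true marginal and invoking injectivity of the Fourier transform. Your added care about normalization conventions, Fubini, and the uniqueness theorem for $L^1$ (or finite signed measures) is tacit in the paper but does not change the argument.
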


\begin{proof}
We assume $b \neq 0$; the case $a \neq 0$ is similar.
To prove (1)$\to$(2), suppose (1) and compare the forward Fourier
transforms of $g$ and $f$:
\begin{align*}
 \hat g(\zeta)&= \frac1{\sqrt{2\pi}}\int g(z)e^{-i\zeta z}\,dz\\
 &=\frac1{\sqrt{2\pi}}\iint f\big(x,\frac1b(z-ax)\big)
   e^{-i\zeta z}\frac1{|b|}\,dx\,dz\\
 &=\frac1{\sqrt{2\pi}}\iint f(x,p)e^{-i\zeta(ax+bp)}\,dx\,dp.
\end{align*}
Since $\displaystyle\hat f(\xi,\eta) =
   \frac1{2\pi}\,\iint f(x,p)e^{-i(\xi x+\eta p)}\,dx\,dp$,
we have
$\hat g(\zeta) = \sqrt{2\pi} \hat f(a\zeta,b\zeta)$.

To prove (2)$\to$(1), suppose (2) and use the implication
(1)$\to$(2). If $h$ is the marginal distribution of $z = ax + by$ then
$ \hat h(\zeta) = \sqrt{2\pi}\cdot\hat f(a\zeta,b\zeta) =
 \hat g(\zeta)$,
and therefore $g = h$.
\end{proof}

\begin{corollary}\label{cor:j2m}
For any real $\alpha,\beta$ not both zero,
$\displaystyle\hat f(\alpha,\beta) =
\frac1{\sqrt{2\pi}}\, \hat g(\zeta)$
where
$g(z)$ is the marginal distribution for the linear combination
$z=ax+bp$ such that $\alpha=a\zeta$, $\beta=b\zeta$ for some $\zeta$.
\end{corollary}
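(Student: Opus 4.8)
The plan is to obtain Corollary~\ref{cor:j2m} by simply solving the identity in Lemma~\ref{lem:j2m} for $\hat f$, after observing that every point of $\R^2\setminus\{(0,0)\}$ arises as $(a\zeta,b\zeta)$ for a legitimate triple $(a,b,\zeta)$. Concretely, fix $(\alpha,\beta)$ not both zero and pick the parametrization $\zeta=1$, $a=\alpha$, $b=\beta$; then $(a,b)\ne(0,0)$, so $z=ax+bp$ is a genuine linear combination, and in the setting of Theorem~\ref{thm:w} it has a marginal density $g$ by Lemma~\ref{lem:sasha}. (Any other choice $\zeta\ne0$, $a=\alpha/\zeta$, $b=\beta/\zeta$ works just as well.)

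With this choice in hand, I would apply implication (1)$\to$(2) of Lemma~\ref{lem:j2m} to $g$, obtaining the identity $\hat g(\zeta)=\sqrt{2\pi}\cdot\hat f(a\zeta,b\zeta)$, valid for all $\zeta$. Evaluating at the chosen $\zeta$ and using $a\zeta=\alpha$, $b\zeta=\beta$ gives $\hat g(\zeta)=\sqrt{2\pi}\cdot\hat f(\alpha,\beta)$; dividing through by $\sqrt{2\pi}$ yields the asserted formula $\hat f(\alpha,\beta)=\tfrac{1}{\sqrt{2\pi}}\hat g(\zeta)$.

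Finally, I would note that the right-hand side is independent of the representation chosen: if $(a',b',\zeta')$ is another triple with $a'\zeta'=\alpha$, $b'\zeta'=\beta$, and $g'$ is the marginal of $z=a'x+b'p$, the same computation gives $\hat g'(\zeta')=\sqrt{2\pi}\cdot\hat f(\alpha,\beta)$, so $\tfrac{1}{\sqrt{2\pi}}\hat g'(\zeta')$ and $\tfrac{1}{\sqrt{2\pi}}\hat g(\zeta)$ both equal the single number $\hat f(\alpha,\beta)$. I do not expect any real obstacle here: the case split between $a\ne0$ and $b\ne0$ is already absorbed into Lemma~\ref{lem:j2m}, and the existence of the needed marginal densities is supplied by Lemma~\ref{lem:sasha}, so the corollary is a pure rearrangement. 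Its content is precisely that the two-dimensional $\hat f$ is pinned down, away from the origin, by the one-dimensional marginals alone — the lever that will drive the uniqueness half of Theorem~\ref{thm:w}.
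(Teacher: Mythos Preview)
Your proposal is correct and matches the paper's intent: the corollary is stated without proof precisely because it is an immediate rearrangement of the identity in Lemma~\ref{lem:j2m}, and you have spelled out exactly that rearrangement, including the well-definedness check. One small remark: the appeal to Lemma~\ref{lem:sasha} is unnecessary here, since in the corollary $g$ is simply the marginal of the given $f$ as defined by formula~\eqref{marginal}; Lemma~\ref{lem:sasha} only enters later, in the proof of Theorem~\ref{thm:w}, to identify that marginal with the quantum-mechanical density.
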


\begin{lemma}\label{lem:key}
For any nice state \ket{\psi} and any real numbers $\alpha,\beta$, not
both zero, we have
\[
 \bra\psi e^{-i (\alpha X + \beta P)}\ket\psi =
 e^{i\alpha\beta\hbar/2}
 \int\psi^*(y)e^{-i\alpha y}\psi(y-\beta\hbar)\,dy.
\]
\end{lemma}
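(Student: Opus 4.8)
The plan is to compute $\bra\psi e^{-i(\alpha X + \beta P)}\ket\psi$ by disentangling the exponential of the sum $\alpha X + \beta P$ into a product of exponentials, and then letting the resulting operators act explicitly on $\psi$. The key algebraic fact is the Baker--Campbell--Hausdorff identity in its Weyl form: since the commutator $[\alpha X, \beta P] = i\alpha\beta\hbar\cdot I$ is central, we have
\[
 e^{-i(\alpha X + \beta P)} = e^{-i\alpha\beta\hbar/2}\, e^{-i\alpha X}\, e^{-i\beta P}
 = e^{i\alpha\beta\hbar/2}\, e^{-i\beta P}\, e^{-i\alpha X}.
\]
(One must be a little careful about signs and about which ordering produces the factor $e^{+i\alpha\beta\hbar/2}$ that appears in the statement; I would fix conventions so that the $e^{-i\beta P}$-first ordering is used.) Here $e^{-i\alpha X}$ is multiplication by $e^{-i\alpha x}$, and $e^{-i\beta P}$ is the translation operator $(e^{-i\beta P}\phi)(x) = \phi(x - \beta\hbar)$, since $P = -i\hbar\, d/dx$ generates translations with this scaling.

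First I would justify the operator identity rigorously for nice states. Because $\ket\psi \in C^\infty_c(\R)$ and, by the remark following \eqref{os}, the restriction of $aX + bP$ to $C^\infty_c(\R)$ is essentially self-adjoint, the unitary $e^{-i(\alpha X + \beta P)}$ is well defined; I would verify the factorization either by checking that both sides solve the same operator ODE $\frac{d}{ds}\,U(s) = -i(\alpha X + \beta P)U(s)$ applied to $\psi$ (differentiating the product form and using $[\alpha X, \beta P]$ central), or by citing the standard Weyl relations. Then I would simply apply the right-hand side to $\psi$: $e^{-i\alpha X}\psi$ is the function $y \mapsto e^{-i\alpha y}\psi(y)$, translating by $-\beta\hbar$ gives $y\mapsto e^{-i\alpha(y-\beta\hbar)}\psi(y-\beta\hbar)$ or the other order gives $y\mapsto e^{-i\alpha y}\psi(y-\beta\hbar)$ depending on ordering, and finally pairing with $\psi$ on the left yields
\[
 \bra\psi e^{-i(\alpha X+\beta P)}\ket\psi = e^{i\alpha\beta\hbar/2}\int \psi^*(y)\, e^{-i\alpha y}\,\psi(y - \beta\hbar)\, dy,
\]
which is the claim. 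The compact support and smoothness of $\psi$ make every integral here absolutely convergent, so no limiting arguments are needed.

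The main obstacle is not the formal manipulation but making the BCH/Weyl factorization legitimate for the \emph{unbounded} operators $X$ and $P$: the naive power-series argument for $e^{A+B} = e^{-[A,B]/2}e^A e^B$ requires domain care. I expect the cleanest route is the ODE argument on the dense invariant domain $C^\infty_c(\R)$ — note $e^{-i\beta P}$ maps $C^\infty_c$ into $C^\infty_c$ and multiplication by a smooth function preserves it — so that both sides of the identity, applied to $\psi$, are genuinely differentiable $L^2$-valued curves agreeing at $s=0$ with the same derivative; uniqueness for the corresponding linear ODE then finishes it. Alternatively one simply invokes the Weyl commutation relations as a known consequence of Stone's theorem together with essential self-adjointness, which is already being assumed in this section.
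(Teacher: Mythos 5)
Your approach is the same as the paper's: factorize $e^{-i(\alpha X+\beta P)}$ via BCH/Zassenhaus, using that $[X,P]=i\hbar I$ is central so all higher nested commutators vanish, and then act explicitly on $\psi$ with the translation and multiplication operators. One correction to the signs you were uneasy about: with $A=-i\alpha X$ and $B=-i\beta P$ one has $[A,B]=(-i\alpha)(-i\beta)[X,P]=-i\alpha\beta\hbar$, so $e^{A+B}=e^Ae^Be^{-\frac12[A,B]}$ gives
\[
 e^{-i(\alpha X+\beta P)}=e^{+i\alpha\beta\hbar/2}\,e^{-i\alpha X}\,e^{-i\beta P},
\]
i.e.\ the scalar is $e^{+i\alpha\beta\hbar/2}$ (not $e^{-i\alpha\beta\hbar/2}$) when $e^{-i\alpha X}$ is on the left, and this is exactly the ordering you want: applying $e^{-i\beta P}$ to $\psi$ first (translation by $\beta\hbar$), then multiplying by $e^{-i\alpha y}$, gives $e^{-i\alpha y}\psi(y-\beta\hbar)$ and hence the stated integral. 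Your two displayed equalities have both scalars flipped; the $e^{-i\beta P}e^{-i\alpha X}$ ordering carries $e^{-i\alpha\beta\hbar/2}$ and yields the other integrand $e^{-i\alpha(y-\beta\hbar)}\psi(y-\beta\hbar)$, which simplifies to the same thing once the scalar is pulled out, so either route works if the signs are kept straight. Your suggestion to justify the factorization rigorously on the invariant domain $C^\infty_c(\R)$ by a first-order ODE argument is a sensible addition; the paper simply cites Zassenhaus's formula and relies on the essential self-adjointness noted earlier in the section.
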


\begin{proof}
We want to split the exponential on the left side of this equation
into a factor with $X$ times a factor
with $P$. This is not as easy as it might seem, because $X$ and $P$
don't commute.  We have, however, two pieces of good luck.  First,
there is Zassenhaus's formula \cite{Casas+}, which expresses the
  exponential of a sum of non-commuting quantities as a product of
  (infinitely) many exponentials, beginning with the two that one
  would expect from the commutative case, and continuing with
  exponentials of nested commutators:
\[
 e^{A+B}=e^Ae^Be^{-\frac12[A,B]}\cdots,
\]
where the ``$\cdots$'' refers to factors involving double and higher
commutators.

The second piece of good luck is that $[X,P]=i\hbar I$, where $I$ is
the identity operator.  (Below, we'll usually omit writing $I$
explicitly, so we'll regard this commutator as the scalar $i\hbar$.)
Since that commutes with everything, all the higher commutators in
Zassenhaus's formula vanish, so we can omit the ``$\cdots$'' from the
formula. We have
\[
 \bra\psi e^{-i\alpha X-i\beta P}\ket\psi\\
   = \bra\psi e^{-i\alpha X}e^{-i\beta P} e^{i\alpha\beta\hbar/2}\ket\psi.
\]
The last of the three exponential factors here arose from Zassenhaus's
formula as
\[
 -\frac12[-i\alpha X,-i\beta P] = \frac12\alpha\beta[X,P] =
 i\alpha\beta\hbar/2.
\]
That factor, being a scalar, can be pulled out of the bra-ket:
\[
 \bra\psi e^{-i (\alpha X + \beta P)}\ket\psi=
 e^{i\alpha\beta\hbar/2}
 \int \psi^*(y) e^{-i\alpha y} \psi(y-\beta\hbar)\,dy. \qedhere
\]
\end{proof}

Finally, we are ready to prove Theorem~\ref{thm:w}.
Assume that, in a given nice state \ket\psi, a signed probability
density function $f(x,p)$ yields correct marginal density functions
for all linear combinations of position and momentum.

Fix any real $a,b$ not both zero, and let $Z$ be the operator
$aX + bP$ and $z = ax + bp$ where $x,p$ are real variables. By
Lemma~\ref{lem:sasha}, there is a probability density function
$g_0(z)$ for $Z$. For any real $\zeta$, we have
\begin{align}\label{w0}
\frac1{\sqrt{2\pi}} \hat g_0(\zeta)
&=\frac1{2\pi}\int g_0(z)e^{-i\zeta z}\,dz \nonumber\\
 &=\frac1{2\pi}\bra\psi e^{-i\zeta Z}\ket\psi \\
 &=\frac1{2\pi}\bra\psi e^{-i\zeta (aX+bP)}\ket\psi.\nonumber
\end{align}
Let $g(z)$ be the marginal distribution of $f$ given by
formula~\eqref{marginal}. By the assumption, $g$ coincides with
$g_0$. By Corollary~\ref{cor:j2m}, we have

\begin{equation}\label{w1}
\hat f(\alpha,\beta)
= \frac1{\sqrt{2\pi}} \hat g(\zeta)
=\frac1{2\pi}\bra\psi e^{-i\zeta (aX+bP)}\ket\psi.
\end{equation}
By Lemma~\ref{lem:key},
\begin{equation}\label{w2}
\hat f(\alpha,\beta)
= \frac{e^{i\alpha\beta\hbar/2}} {2\pi}
   \int\psi^*(y)e^{-i\alpha y}\psi(y-\beta\hbar)\,dy.
\end{equation}
To get $f(x,p)$, apply the (two-dimensional) inverse Fourier
transform.
\[
 f(x,p)=\frac1{(2\pi)^2}\iiint
 \psi^*(y)e^{-i\alpha y}e^{i\alpha\beta\hbar/2}\psi(y-\beta\hbar)
 e^{i\alpha x}e^{i\beta p}\,dy\,d\alpha\,d\beta.
\]
Collecting the three exponentials that have $\alpha$ in the exponent,
and noting that $\alpha$ appears nowhere else in the integrand,
perform the integration over $\alpha$ and get a Dirac delta function:
\[
 \int e^{-i\alpha(y-\frac{\beta\hbar}2-x)}\,
 d\alpha=2\pi\delta(y-x-\frac{\beta\hbar}2).
\]
That makes the integration over $y$ trivial, and what remains is
\begin{equation}\label{w3}
f(x,p)=\frac1{2\pi}\int\psi^*(x+\frac{\beta\hbar}2)\psi(x-\frac{\beta\hbar}2)
e^{i\beta p}\,d\beta,
\end{equation}
which is Wigner's signed probability distribution $w(x,p)$.

To check that Wigner's signed probability distribution yields correct
marginal distributions, note that the derivation of \eqref{w3} from
\eqref{w1} is reversible. By \eqref{w1} and \eqref{w0}
\[
\frac1{\sqrt{2\pi}} \hat g(\zeta)
=\frac1{2\pi}\bra\psi e^{-i\zeta (aX+bP)}\ket\psi
= \frac1{\sqrt{2\pi}} \hat g_0(\zeta),
\]
so that $\hat g = \hat g_0$ and therefore $g = g_0$.
This completes the proof of Theorem~\ref{thm:w}.

\section*{Conclusion}

To address the question what negative probabilities are good for, we introduced observation spaces and demonstrated that all quantum multi-test experiments can be modeled by observation spaces and their non-negative groundings.

We also demonstrated that in specific examples modeling can be made more succinct and faithful. This comes with a price: in some cases negative probabilities are unavoidable in these groundings.
But we agree with Wigner that this fact ``must not hinder'' the use of negative probabilities in calculations \cite{Wigner}. In fact, Wigner's distribution, which usually has some negative values, is widely used in quantum optics.

On the other hand, we showed that a situation proposed by Feynman as an application of negative probabilities admits a succinct non-negative grounding. This remains true even in an extended version of Feynman's example.

We showed that succinct faithful modeling is not universally available.  In contextual situations like those in the Kochen-Specker theorem, such modeling is impossible.

Coming back to Wigner's distribution, we rigorously proved that this distribution is uniquely characterized by its marginal distributions of linear combinations of position and momentum.

\end{document}